\newtheorem*{definition*}{Definition}
\newtheorem{claim}{Claim}
\newtheorem{lemma}{Lemma}
\newtheorem{theorem}{Theorem}
\newtheorem{corollary}{Corollary}
\def\*#1{\mathbf{#1}}
\newlist{procsteps}{enumerate}{1}
\setlist[procsteps,1]{label=\arabic*., ref=\arabic*}
\crefname{procstepsi}{Procedure}{Procedures}
\begin{document}

\title{Minimum-hop Constellation Design for Low Earth Orbit Satellite Networks
}

\author{\IEEEauthorblockN{Chirag Rao\textsuperscript{\textdagger *} and Eytan Modiano\textsuperscript{\textdagger}}
\IEEEauthorblockA{\textit{ \textsuperscript{\textdagger}Laboratory for Information \& Decision Systems, MIT}\\
\textit{\textsuperscript{*}DEVCOM Army Research Laboratory} \\ \textit{crao@mit.edu, modiano@mit.edu}}\thanks{We would like to thank SES S.A. for supporting this research.}
}

\maketitle
%%%%%%%%%%%%%%%%%%
%%Abstract
%%%%%%%%%%%%%%%%%%
\begin{abstract}
We consider a Low Earth Orbit (LEO) satellite network with each satellite capable of establishing inter-satellite link (ISL) connections for satellite-to-satellite communication. Since ISLs can be reoriented to change the topology, we optimize the topology to minimize the average shortest path length (ASPL). We characterize the optimal ASPL ISL topology in two families of topologies, 1) vertex-symmetric in which the ISL connections at a satellite node represent a motif that is repeated at all other satellite nodes, and 2) general regular topologies in which no such repeating pattern need exist. We establish ASPL lower bounds for both scenarios and show constructions for which they are achievable assuming each satellite makes 3 or 4 ISL connections. For the symmetric case, we show that the mesh grid is suboptimal in both ASPL and diameter. Additionally, we show there are constructions that maintain intra-orbital ISL connections while still achieving near-optimal ASPL performance. For the general case we show it is possible to construct networks with ASPL close to the general lower bound when the network is sufficiently dense. Simulation results show that for both scenarios, one can find topologies that are very close to the lower bounds as the network size scales.
\end{abstract}

\begin{IEEEkeywords}
Satellite networks, topology design, LEO, ISL
\end{IEEEkeywords}
%%%%%%%%%%%%%%%%%%
%% Introduction
%%%%%%%%%%%%%%%%%%
\section{Introduction}
% \textcolor{red}{Some rhetorical moves that need to be made better, clearer, more verbose
% \begin{itemize}
%     \item explain better what models are out there
%     \item why is $m>n$? when is that a valid point? some citations of constellation designs would be helpful. Also explain that even if $m<n$ same results hold but flip axes.
%     \item motivate the hybrid topology more. there's interesting potential with the free link acting additionally as optical bypass
% \end{itemize}
% }
 Network topology design is critical for ensuring optimal operation of satellite networks. Network operators have to design topologies to meet the stringent Quality of Service demands required for user traffic. For large satellite networks with thousands of satellite nodes, such as the proposed constellations for Starlink and Kuiper, topology design becomes even more critical for ensuring network operation can be done with as little overhead as possible, since deploying satellite systems is financially costly~\cite{werner2001topological}.  
 However, the introduction of high-bandwidth optical inter-satellite links (ISLs) for the space segment of the satellite network offers some topology design flexibility: since ISLs can be reoriented to establish a link between any two satellites within link range, the space-segment network topology can be reconfigured after deployment of the constellation. 
 Given the freedom to choose from a large set of possible topologies, choosing the best network topology is desirable. 
 
 Latency is a metric of paramount importance in satellite networks for users that need near real-time performance, such as those using the network for military applications or for commercial operations with strict time constraints such as high-frequency trading~\cite{handley2018delay}. 
 % It is speculated that the primary commercial users of large scale Low Earth Orbit (LEO) satellite networks will be those that need very low latency communication between geographically disparate locations~\cite{al2022next}.
 Many strategies exist for reducing or limiting network latency, such as routing for mitigating queueing backlog at forwarding satellite nodes~\cite{cao2013per,handley2018delay}, and resource management strategies that guarantee particular quality of service criteria, including latency~\cite{zanzi2020laco}. 
 Even so, the topology strongly influences how routing schemes behave and how much overhead they incur; 
 it also determines how much network capacity is available for resource management. 
The most direct effect a topology has on latency is the distance a packet travels.

 A common metric for measuring distance is the average shortest path length (ASPL) of the network. 
 The ASPL captures the distance a packet must travel on average through the network to reach its destination assuming a shortest-path routing scheme like Dijkstra's Algorithm is used to choose the shortest network path. 
 ASPL serves as a proxy for measuring latency, since each hop along the path in the network incurs delay and contributes to the latency. 
Other metrics exist for measuring latency, such as geographical distance and direct end-to-end latency probing, and have been well-studied in the literature~\cite{wang2008constellation,chen2021analysis}.
 ASPL and its relation to average geographical path distance was studied in~\cite{chen2022leo}, which demonstrated there exists parameter settings for which the two metrics are equivalent. 
 The geographical distance between satellite nodes is subject to change since LEO orbits are not geostationary. 
 Therefore, a geographical distance-optimal topology can quickly become suboptimal for that metric, whereas the optimal ASPL topology would continue to be optimal as long as the set of visible satellites remains unchanged.
 In~\cite{wang2008constellation}, delay probing is used to infer end-to-end latency caused by various sources of delay. While delay probing is appropriate for measuring latency in real-time, latency for topology design must be measured with respect to a larger timescale: probing packets traverse a satellite network on the order of milliseconds whereas topologies can only be reconfigured via ISLs on the order of minutes~\cite{bhattacharjee2023laser}. 
 We therefore study the optimal design of satellite-to-satellite network topology with respect to the ASPL.

There are several models for representing ISL networks in the literature, capturing aspects of the constellation that affect topology design, such as orbital parameters (altitude, degree of ascension), communication power budget, antenna design, and ISL setup time~\cite{werner2001topological,henderson2000network,soret2019inter}. Many standard constellation designs exist such as the Walker Delta or Polar constellations, which can influence the choice of ISL topology~\cite{royster2023network}. In this paper, we assume the satellites' positions are fixed, which is a common model in the literature~\cite{wang2007topological}. We study ASPL-optimal topology design assuming each satellite node has 1 unit of traffic for every other node, and each satellite makes exactly $\Delta$ ISL connections with other satellites.  
We represent the topology as a graph --- with satellites as vertices and ISL connections as edges --- and consider two broad topology design cases: vertex-symmetric topology design, and general regular topology design. 
Vertex-symmetric topologies can be expressed as a repeating motif in the network where connections between vertices follow the same pattern, such as the mesh grid~\cite{bhattacherjee_network_2019}, whereas general regular topologies need not exhibit such structure. There are distinct advantages to vertex-symmetric topologies, since the regular structure provides many alternative shortest paths between vertices and provides ease of management. 
But, as will be shown, imposing vertex symmetry on a network of size $N$ can limit the ASPL performance to scale $\Theta(\sqrt{N})$, and improves to $\Theta(\log N)$ without the symmetry condition. 
For each topology design case, we analyze the optimal ASPL topology assuming that the degree is $3$ or $4$, the latter a typical model in the literature and in practice~\cite{sun_capacity_2003}, and the former a potentially more cost-effective alternative.
% Our primary contributions in this work are the development of lower bounds on ASPL for the vertex-symmetric case and demonstration

Our primary contributions in this paper are the following.  We develop lower bounds for the vertex-symmetric case depending on whether the number of ISLs per satellite is even or odd valued. Next, we show that the mesh grid topology is suboptimal in both ASPL and graph diameter.
We show the vertex-symmetric lower bounds are achievable for degree 3 and 4. In particular, we show that when the degree is 4 there exist topologies that meet the lower bound and are comparable in link range to that of the mesh grid.
For the general regular case we develop a procedure that with high probability finds low-ASPL topologies that scale according to the general regular topology ASPL lower bound.

The remainder of the paper is organized as follows. In~\Cref{sec:model} we mathematically define the network model. In~\Cref{sec:LB} we develop lower bounds for the vertex-symmetric design case when the degree is 3 and 4. 
In~\Cref{sec:results}, we search for ASPL-optimal topologies for both design cases.
In~\Cref{sec:conclusion} we provide concluding remarks.

%%%%%%%%%%%%%%%%%%
%% Models
%%%%%%%%%%%%%%%%%%
\section{Network Model}\label{sec:model}
% We use a snapshot of the constellation and model the position of satellites to be static.
% Additionally, we adopt a toroidal spatial model, which is a common representation of ISL networks where each satellite lies on two axes that wrap around the earth~\cite{sun_capacity_2003}. Satellite nodes are represented as points of a lattice embedded in this toroidal space to represent the wraparound nature of satellite orbits (see~\Cref{fig:earth_to_torus}). To first order, one can use this setup to model spatial relations in a satellite constellation. Clearly, this is an abstraction and does not capture the geometry of a satellite constellation with full fidelity. However, it is useful for gaining insight into topology design~\cite{modiano1996efficient,sun_capacity_2003}. Each satellite is represented as a node in a graph representing the ISL topology. 
\begin{figure}
    \centering
\includegraphics[width=0.48\textwidth]{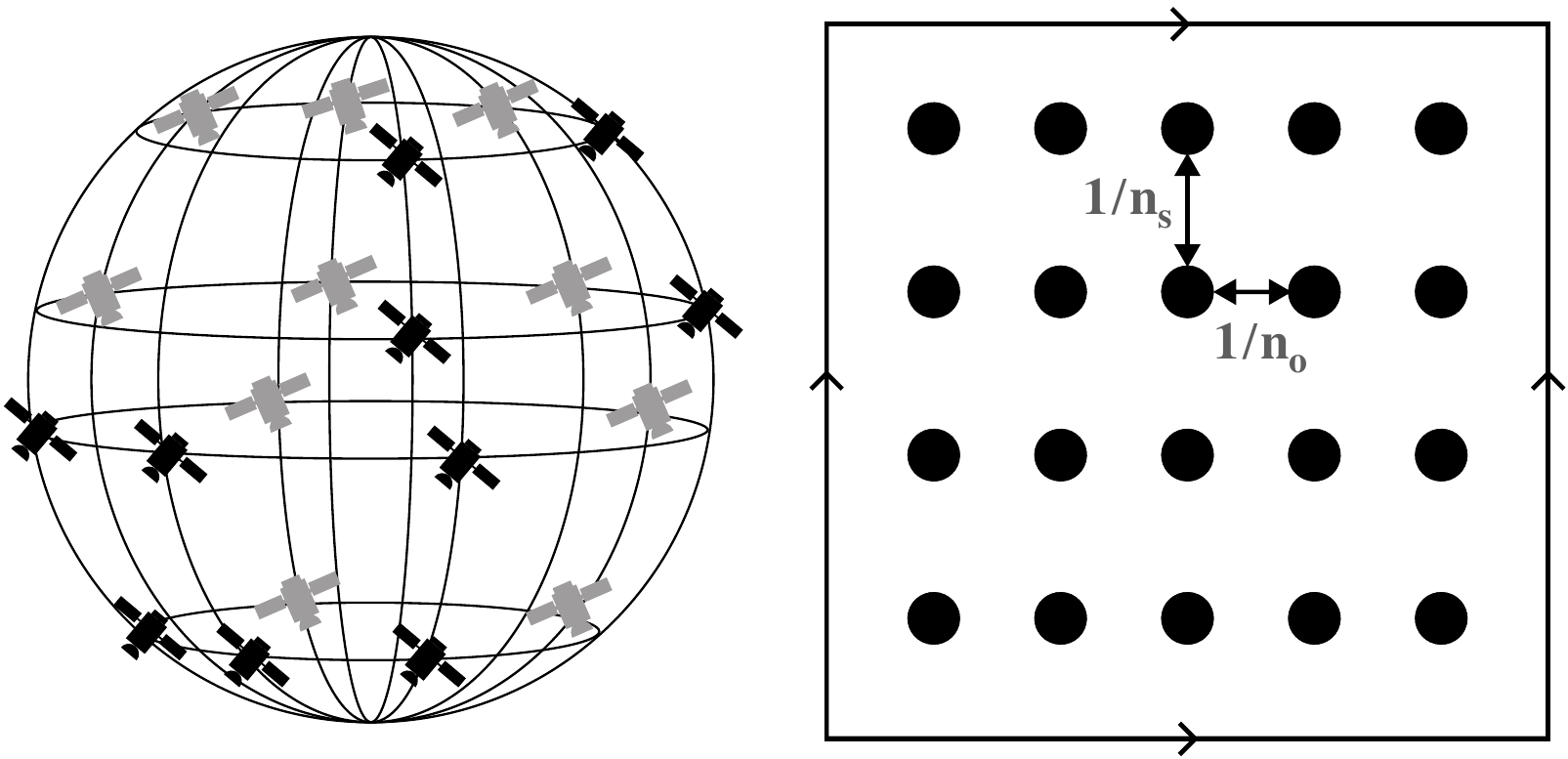}
    \caption{The constellation on the left is represented as a lattice lying in a flat square torus with unit area on the right.}
    \label{fig:earth_to_torus}
\end{figure}
In this work, we use the terms satellite, vertex, and node interchangeably. We model the network as a static $n_s\times n_o$ constellation where the number of satellites is $N=n_sn_o$.  The satellites are arranged into $n_s$ satellites per orbital plane (rows) and $n_o$  orbital planes (columns) in a two-dimensional lattice formation. Each node $\*x$ is uniquely identified by its row and column position in the lattice $\*x\in\{0,\hdots,n_s-1\}\times\{0,\hdots,n_o-1\}$. The lattice formation is embedded into a unit area square torus to represent the wraparound nature of satellite orbits (see~\Cref{fig:earth_to_torus}). To first order, one can use this setup to model spatial relations in a satellite constellation. Clearly, this is an abstraction and does not capture the geometry of a satellite constellation with full fidelity. However, it is useful for gaining insight into topology design~\cite{modiano1996efficient,sun_capacity_2003}. 

A network topology $G$ can be represented as an undirected graph $G=(V,E)$, with vertex set $V$ representing the satellite nodes and the edge set $E$ representing the ISL connections established between satellite nodes. Each satellite must establish $\Delta$ ISL connections. Thus, the graph $G$ is $\Delta$-regular, with each vertex having degree $\Delta$. 
We define the operator ``$\oplus$'' when applied to a vertex $\*v\in V$ and 2-dimensional integer vector $\*e\in\mathbb{Z}^2$ represent the following operation:
$$\*v\oplus \*e\triangleq \begin{bmatrix}
    v[0]+e[0]\pmod{ n_s} \\ v[1]+e[1]\pmod {n_o}
\end{bmatrix}\,,$$
For example, in a $7\times 5$ constellation with $\*v=[1\ 2]^\top$ and $\*e=[4\ 6]^\top$, $\*v\oplus \*e=[5\ 3]^\top$. Thus the set of vertices $V$ is closed under ``$\oplus$". We similarly define ``$\ominus$'' to be $\*v\ominus \*e=\*v\oplus -\*e$.

 A graph with even-valued degree $\Delta$ is defined as vertex-symmetric if for any node $\*v\in V$ and a set $\{\*e_1,\*e_2,\hdots,\*e_{\Delta/2}\}\subset \mathbb{Z}^2$ its adjacent neighbors are the set of nodes $\left\{\*v\oplus \*e_1,\hdots,\*v\oplus \*e_{\Delta/2}, \*v\ominus \*e_1,\hdots, \*v\ominus \*e_{\Delta/2}\right\}$.
 The vectors $\*e_1,\hdots,\*e_{\Delta/2}$ are called \textit{jumps}, a collection of which is a \textit{jump set}.
 An example of a vertex-symmetric topology is the mesh grid, defined by the jump set $\{[1\  0]^\top, [0\ 1]^\top\}$ and illustrated in~\Cref{fig:meshgrid}.
\begin{figure}
    \centering
    \includegraphics[width=0.45\textwidth]{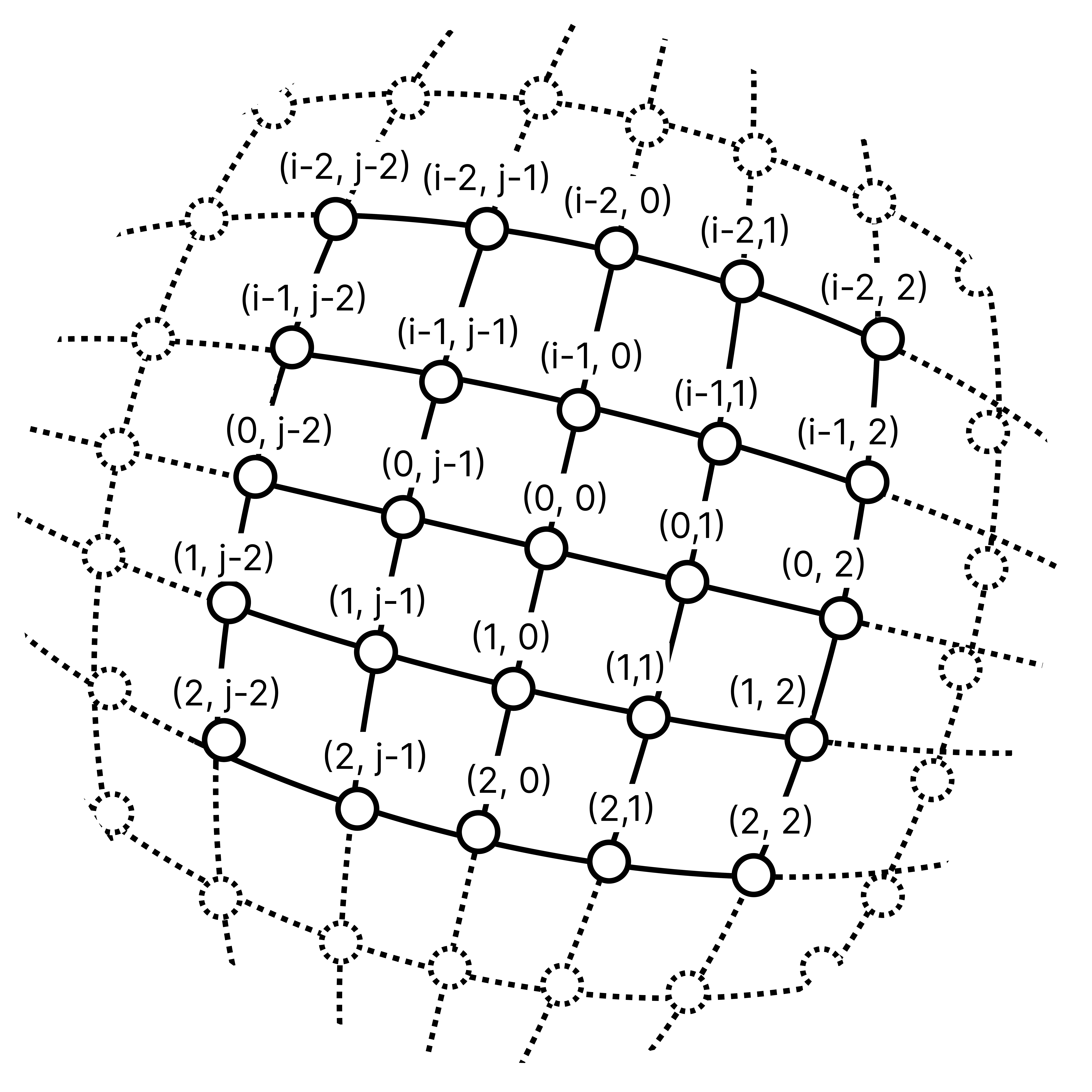}
    \caption{The mesh grid topology for $n_s=i$ and $n_o=j$.}
    \label{fig:meshgrid}
\end{figure}
We now enumerate the number of unique nodes that are reachable with shortest path length $h$ with an example illustrated in~\Cref{fig:evenDegTree}. 
\begin{figure}
    \centering
    \includegraphics[width=0.35\textwidth]{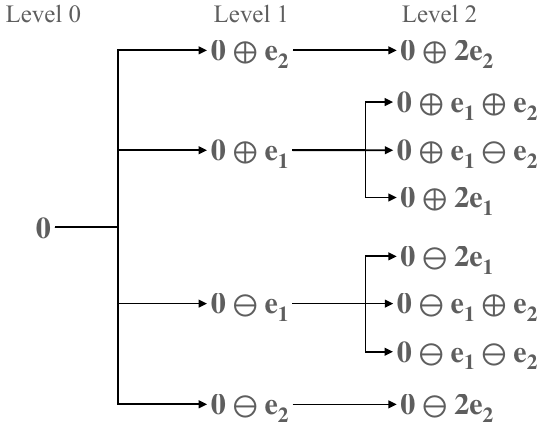}
    \caption{Vertex-symmetric shortest path length coverage for $\Delta=4$}
    \label{fig:evenDegTree}
\end{figure}
The tree in~\Cref{fig:evenDegTree} is for a vertex symmetric network with $\Delta=4$, and it can be formed by breadth-first search through the graph.
The jump set $\{\*e_1,\*e_2\}$ determines which hops can be taken from a node. The corresponding hop set is $\{\*e_1,-\*e_1,\*e_1,-\*e_2\}$.
Level $i$ of the tree shows the set of nodes that are reached with a shortest path length of $i$.
Without loss of generality, the graph traversal begins at node $\*0$ at Level $0$.
The nodes at Level $1$ are reached by traversing along the 4 distinct hops.
To reach Level $2$ nodes, paths are composed of two hops.
Note that 
\begin{align*}
    \*0\oplus \*e_1\oplus \*e_2=\*0\oplus \*e_2 \oplus \*e_1\,,
\end{align*}
so the commutativity of ``$\oplus$" implies the order in which hops are taken does not uniquely determine the destination. 
Additionally, no path in~\Cref{fig:evenDegTree} includes $\*e_i$ and $-\*e_i$ simultaneously, because the net effect would be traversing a path that is not the shortest path.

For a regular graph with odd degree, the number of vertices must be even-valued, since $2|E|=\Delta|V|$.  For odd-valued $\Delta$, the same notion of vertex symmetry as defined for even-degree cannot hold, since for odd $\Delta$ the jump-set could not have $\Delta/2$ elements. 
Therefore, we introduce a different definition of symmetry for odd $\Delta$. We partition the set of satellites into two equally-sized sets, $V_R$ and $V_B$. The set is partitioned in a spatially symmetric manner. Specifically, if
$$\mathbbm{1}^R_{\*v}=\begin{cases}
    1,\quad \*v\in V_R\\
    0,\quad \*v\in V_B
\end{cases}\,,$$
then for any $\*u,\*v\in V_R$ and $\*u\neq \*v$, and a hop $\*e$,  $\mathbbm{1}^R_{\*u\oplus \*e}=\mathbbm{1}^R_{\*v\oplus \*e}$. Likewise, for $\mathbbm{1}^B_{\*v}=1-\mathbbm{1}^R_{\*v}$, any $\*u',\*v'\in V_B$ and $\*u'\neq\*v'$ and hop $-\*e$, we have $\mathbbm{1}^B_{\*u'\ominus \*e}=\mathbbm{1}^B_{\*v'\ominus \*e}$.
An example of a partition that meets this condition is shown in~\Cref{fig:example_d3}. 
Given such a partitioning, we define vertex symmetry for odd degree. 
Define a jump set $\{\*e_1,\*e_2,\hdots,\*e_\Delta\}$. 
A graph is symmetric if for any node $\*v\in V_B$ its adjacent neighbors are the set of nodes $\{\*v\oplus \*e_1,\*v\oplus \*e_2,\hdots, \*v\oplus \*e_\Delta\}$. This condition is sufficient for vertex symmetry. An alternative condition for vertex symmetry is if for any node $\*u\in V_R$ its adjacent neighbors are the set of nodes $\{\*u\ominus \*e_1,\*u\ominus \*e_2,\hdots, \*u\ominus \*e_\Delta\}$. 
This node partitioning induces a partitioning of hops such that exactly $\Delta$ types of hops are available at each node, which guarantees $\Delta$-regularity. 
An example vertex-symmetric topology for $\Delta=3$ is the honeycomb mesh in~\Cref{fig:example_d3}, with jump set $\{[1\ 0]^\top,[-1\ 0]^\top,[0\ 1]^\top\}$. 
3-regular symmetric topologies have been studied in the context of interconnection networks for computer architectures~\cite{stojmenovic1997honeycomb}, but have not been analyzed in the satellite networks literature to the best of our knowledge.
\begin{figure}
    \centering
    \includegraphics[width=0.45\textwidth]{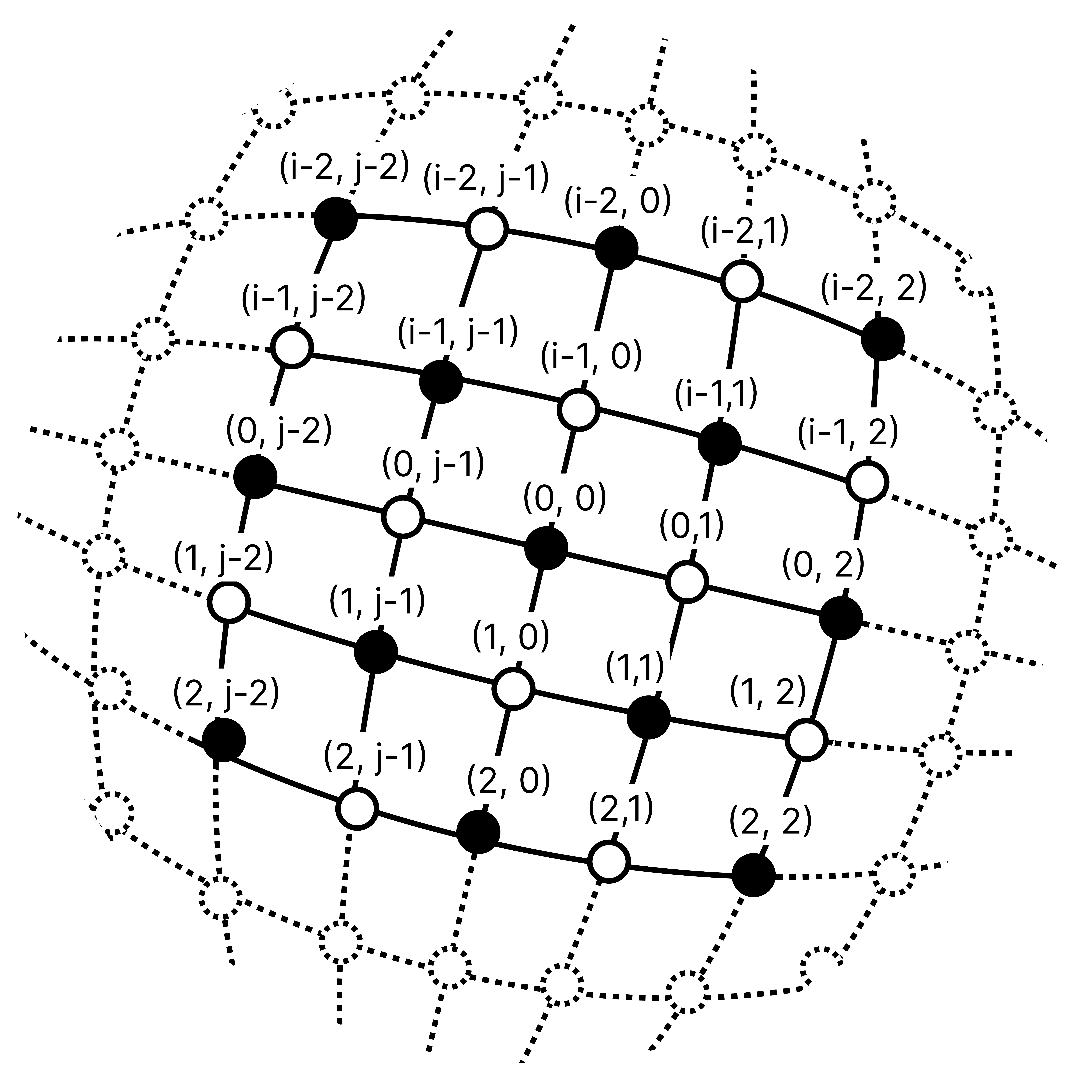}
    \caption{Example vertex-symmetric topology with $\Delta=3$: the honeycomb mesh, $n_s=i$, and $n_o=j$. The shaded nodes comprise one partition. The unshaded nodes comprise the other equally-sized partition.}
    \label{fig:example_d3}
\end{figure}

 Finally, a general regular topology is any simple $\Delta$-regular graph that does not contain any self-loops or double edges.
The graph need not exhibit any vertex or spatial symmetry, but each vertex has degree $\Delta$ to ensure each satellite uses its $\Delta$ ISLs to establish connections.
%%%%%%%%%%%%%%%%%%
%% Preliminaries
%%%%%%%%%%%%%%%%%%
% \section{Preliminary Results}
\section{Lower Bounds on ASPL}\label{sec:LB}
\subsection{Degree 4 Vertex-Symmetric Topologies}
We now derive a lower bound on ASPL for vertex-symmetric networks with degree 4 and 3. 
% maximizing the number of neighbors that can be reached in the fewest number of hops. 
We state the following lemma that will be helpful for deriving the lower bound for $\Delta=4$. The lemma comes from adapting Theorem 1 from~\cite{boesch_reliable_1985}
\begin{lemma}\label{lem:boesch}
    In a $\Delta$-regular vertex-symmetric graph where $\Delta$ is even-valued, the maximum number of nodes that can be reached by a root vertex with a path length of up to $\ell$ hops cannot exceed $Y$, where 
    \begin{align}
    Y=\sum_{j=1}^t\binom{\Delta/2}{j}\binom{\ell-1}{j-1}2^j\,, \label{eq:boesch}
    \end{align}
    and $t=\min(\frac{\Delta}{2},\ell)$.
\end{lemma}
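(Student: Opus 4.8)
The plan is to reduce the statement to a pure counting problem about jump multiplicities and then bound it combinatorially. First I would observe that a vertex-symmetric graph is vertex-transitive: the map $\*x\mapsto\*x\oplus\*c$ is a graph automorphism for every $\*c$, so without loss of generality the root is $\*0$. At every vertex the set of available moves (``hops'') is $H=\{\pm\*e_1,\dots,\pm\*e_{\Delta/2}\}$, so a path of length $h$ from $\*0$ terminates at $\*0\oplus\*h_1\oplus\cdots\oplus\*h_h$ with each $\*h_k\in H$. Since $\oplus$ is the addition of the abelian group $\mathbb{Z}_{n_s}\times\mathbb{Z}_{n_o}$, this endpoint equals $\*0\oplus\big(\sum_{i=1}^{\Delta/2}c_i\*e_i\big)$, where $c_i\in\mathbb{Z}$ is the net (signed) number of times $\*e_i$ is traversed. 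Thus the endpoint of a path is determined by the integer tuple $(c_1,\dots,c_{\Delta/2})$ alone.

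Next I would characterize distance in terms of these tuples. Given any tuple $(c_i)$, concatenating $|c_i|$ copies of $\mathrm{sign}(c_i)\*e_i$ over all $i$ yields a path of length $\sum_i|c_i|$ that reaches $\sum_i c_i\*e_i$; conversely, any path reaching that vertex has a net tuple whose $\ell_1$-norm is at most the path length, because each hop changes exactly one coordinate by $\pm1$ and hence changes $\sum_i|c_i|$ by at most $1$. Therefore the distance from $\*0$ to a vertex $\*v$ equals $\min\{\sum_i|c_i|:\sum_i c_i\*e_i\equiv\*v\}$, the congruence taken in the torus. In particular, every vertex at distance exactly $\ell$ is of the form $\sum_i c_i\*e_i$ for some tuple with $\sum_i|c_i|=\ell$ (a strictly smaller norm would give a shorter path), so the number of such vertices is at most the number of integer tuples $(c_1,\dots,c_{\Delta/2})$ with $\sum_i|c_i|=\ell$; wraparound can only identify distinct tuples with the same vertex, which only strengthens the bound.

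It remains to count those tuples, which I would do by grouping on the number $j$ of nonzero entries. There are $\binom{\Delta/2}{j}$ choices for the support; having fixed it, there are $2^j$ sign assignments and $\binom{\ell-1}{j-1}$ ways to pick positive magnitudes summing to $\ell$ (compositions of $\ell$ into $j$ parts). The index $j$ runs from $1$ (we count vertices other than the root, so $\ell\ge1$ forces at least one nonzero coordinate) up to $t=\min(\Delta/2,\ell)$, since a tuple has at most $\Delta/2$ coordinates and at most $\ell$ strictly positive magnitudes. Summing over $j$ yields exactly $Y$ as in~\eqref{eq:boesch}, which bounds the number of vertices at distance exactly $\ell$ (equivalently, the size of the $\ell$-th breadth-first-search level in~\Cref{fig:evenDegTree}), the quantity needed for the ASPL lower bound.

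The only place real care is needed is the identity $\mathrm{dist}(\*0,\*v)=\min\sum_i|c_i|$: one must rule out that a shortest path ever contains a hop together with its inverse (deleting both gives a strictly shorter path to the same vertex) and, more generally, that no ``detour'' nets to a smaller-norm tuple — this is precisely the $\ell_1$-monotonicity observation above. This is exactly what adapting Theorem~1 of~\cite{boesch_reliable_1985} supplies: that theorem establishes the analogous inequality for circulant graphs, i.e.\ Cayley graphs of the cyclic group $\mathbb{Z}_N$, and the argument uses only the commutativity of the group operation; replacing $\mathbb{Z}_N$ by $\mathbb{Z}_{n_s}\times\mathbb{Z}_{n_o}$ leaves it valid verbatim.
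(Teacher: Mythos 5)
Your proof is correct and is precisely the adaptation of Theorem~1 of~\cite{boesch_reliable_1985} that the paper invokes without writing out: pass to net signed jump counts $(c_1,\dots,c_{\Delta/2})$ using commutativity of $\oplus$, show the graph distance equals the minimum $\ell_1$-norm of a representing tuple, and count tuples of norm $\ell$ by support size, signs, and compositions. One remark: you prove the bound for vertices at distance \emph{exactly} $\ell$, which is what the formula $Y$ counts and how the paper subsequently uses it (summing $4\ell$ over levels), whereas the lemma as literally stated says ``up to $\ell$ hops''; the literal reading is false (for $\Delta=4$, $\ell=2$ one gets $Y=8$ while up to $1+4+8=13$ vertices can lie within two hops), so your reading is the right one and worth stating explicitly.
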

Substituting $\Delta=4$ into~\Cref{eq:boesch}, we see that for all non-negative $\ell$, the maximum number of unique nodes reached with shortest path length $\ell$ is $4\ell$. Applying this upper bound iteratively, we get that the maximum number of distinct nodes that can be reached in up to $h$ hops is
    $1+\sum_{\ell=1}^h4\ell=2h^2 +2h +1\,.$
This quantity grows quadratically in $h$. Using this fact along with~\Cref{lem:boesch} leads to a lower bound on average shortest hop for a network of size $N$ and degree $4$.
\begin{claim}\label{cl:vs4lb}
    Define $A(G)$ as the ASPL of graph $G$. The ASPL for any degree 4 vertex-symmetric graph $G$ is lower bounded by 
    \begin{dmath}\label{eq:vs4lb}
        A(G)\geq\frac{1}{N-1}\left[(k+1)(N-2k^2-2k-1)+\sum_{i=1}^k 4i^2\right]
    \end{dmath}
\end{claim}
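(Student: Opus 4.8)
The plan is to lower-bound the ASPL by showing that the "best case" distance distribution — the one that minimizes the average — is the one that packs as many vertices as possible into each successive hop-shell, subject to the per-shell cap of $4i$ vertices at distance $i$ established via \Cref{lem:boesch}. First I would fix $k$ to be the unique integer such that $1 + \sum_{i=1}^{k} 4i = 2k^2+2k+1 \le N < 2(k{+}1)^2 + 2(k{+}1) + 1$; this $k$ is essentially the radius needed to cover all $N$ vertices under the most optimistic packing. The sum $\sum_{v \ne \text{root}} d(\text{root}, v)$ is minimized when, for each $i = 1, \dots, k$, exactly $4i$ vertices sit at distance $i$ (the maximum allowed), and all remaining vertices — there are $N - (2k^2+2k+1)$ of them — sit at the next distance $k+1$. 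Any other distribution consistent with the shell caps can only move vertices to larger distances, increasing the sum; I would make this precise with a short exchange/rearrangement argument (if any vertex at distance $\le k$ is "missing" relative to the cap, some vertex that is currently at distance $\ge k+1$ could notionally be pulled inward, strictly decreasing the total — so the extremal configuration is the greedily-packed one).

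Carrying this out: the minimal total distance from a fixed root is at least
\[
\sum_{i=1}^{k} i\cdot 4i \;+\; (k+1)\bigl(N - (2k^2+2k+1)\bigr) \;=\; \sum_{i=1}^{k} 4i^2 \;+\; (k+1)\bigl(N - 2k^2 - 2k - 1\bigr).
\]
Since the graph is vertex-symmetric, every vertex sees the same distance profile, so the average over all ordered pairs equals this per-root sum divided by $N-1$, giving exactly the bound in \Cref{eq:vs4lb}. I would note that vertex-symmetry is used twice: once to invoke \Cref{lem:boesch} for the shell cap $4i$, and once to conclude the per-root computation suffices for the global ASPL.

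The main obstacle, and the step deserving the most care, is the rearrangement argument that the greedy shell-packing is genuinely optimal. The subtlety is that the shell caps are not independent across distances — \Cref{lem:boesch} bounds the number reachable \emph{within} $\ell$ hops cumulatively, and in a real graph you cannot freely realize the cap at shell $i$ without constraining later shells. However, for a pure \emph{lower bound} on ASPL this works in our favor: we only need that \emph{no} distribution does better than the greedy one, and the greedy one respects every cumulative cap with equality up to radius $k$, so it dominates (in the majorization sense of putting mass as early as possible) any feasible distribution. I would phrase this as: for any graph $G$, let $n_i$ be the number of vertices at distance exactly $i$ from a fixed root; then $\sum_{i=1}^{\ell} n_i \le 2\ell^2 + 2\ell + 1$ for all $\ell$, and among all nonnegative integer sequences $(n_i)$ with $\sum_i n_i = N-1$ satisfying these partial-sum constraints, $\sum_i i\,n_i$ is minimized by the greedy choice — a standard fact, provable by a one-line swap argument. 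The remaining work is the routine arithmetic already sketched, plus checking the boundary case where $N = 2k^2+2k+1$ exactly (the last term vanishes and the bound reduces to a clean closed form).
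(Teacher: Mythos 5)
Your proposal is correct and follows essentially the same route as the paper: the paper formalizes the greedy shell-packing as the relaxation of an integer linear program with per-shell caps $x_i\le 4i$ and proves optimality of the greedy solution by the same exchange argument you sketch, then determines $k$ from $2k^2+2k+1\le N<2(k+1)^2+2(k+1)+1$. Your explicit remarks on the cumulative nature of the cap from \Cref{lem:boesch} and on the role of vertex-symmetry in reducing the ASPL to a single-root computation are sound and consistent with the paper's (more terse) treatment.
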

where $k=\lfloor \frac{-1+\sqrt{2N-1}}{2}\rfloor$\,.
\begin{proof}
    We prove~\Cref{cl:vs4lb} by showing that minimizing ASPL is equivalent to successively maximizing the number of neighbors in each level, starting at the closest level. We cast the ASPL problem as the relaxation of a mixed integer linear program:
    \begin{align}
        \min_{x_1,\hdots,x_{N-1}}\quad &\frac{1}{N-1}\sum_{i=1}^{N -1} ix_i\label{eq:packing_objective}\\
        \text{s.t.}\quad &x_i\leq 4i,\, i=1,\hdots,N-1 \label{eq:capacity_constraint}\\
        &\sum_{i=1}^{N-1}x_i=N-1 \label{eq:ell_constraint}\\
        &x_i\geq 0,\, i=1,\hdots,N-1
    \end{align}
where each $x_i$ denotes the number of nodes with shortest path length $i$,~\Cref{eq:capacity_constraint} is the direct application of the upper bound from~\Cref{lem:boesch}, and~\Cref{eq:ell_constraint} ensures all $N-1$ nodes in the network are spanned. By inspection, we see that the optimal approach is to proceed in ascending order of $i$ and maximize the value of $x_i$. If the constraint in~\Cref{eq:capacity_constraint} is active, then one proceeds to maximize the value of $x_{i+1}$. If the constraint in~\Cref{eq:ell_constraint} is met upon setting the value of $x_{k+1}$ for some $k$, then all remaining decision variables $\{x_i\}_{i>k+1}$ are set to $0$. The optimal solution as a result of this procedure is exactly a greedy packing of as many nodes into each successive level as possible. The solution is unique, since any other optimal solution requires reducing the value of $x_i$ for some $i\leq k$ and increasing the value of $x_j$ for $j>k$ so that the constraint in~\Cref{eq:ell_constraint} can be met. Since $j>i$, the objective value would strictly increase, which is a contradiction.
To complete the proof we find $k$, which is omitted here for brevity, and can be found in the appendix.
\end{proof}
As a corollary to~\Cref{cl:vs4lb}, we have the following.
\begin{corollary}~\label{cor:lim_vs4lb}
    The lower bound in~\Cref{cl:vs4lb} converges to $\frac{\sqrt{2N}}{3}\approx 0.47\sqrt{N}\,$ as $N\to\infty\,.$
\end{corollary}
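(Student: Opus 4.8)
The plan is to substitute the closed form for $k$ together with the identity $\sum_{i=1}^{k} i^{2} = \tfrac{1}{6}k(k+1)(2k+1)$ into the bracketed expression of~\Cref{eq:vs4lb}, and then extract the leading term in $N$; everything else is of lower order and disappears after dividing by $N-1$.

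First I would record the asymptotics of $k$. Since $k = \lfloor \tfrac{-1+\sqrt{2N-1}}{2}\rfloor$, we may write $k = \tfrac{-1+\sqrt{2N-1}}{2} - \delta_N$ with $\delta_N \in [0,1)$, so that $k = \sqrt{N/2} + O(1)$ and $k+1 = \tfrac{1+\sqrt{2N-1}}{2} - \delta_N$. Next, using $\sum_{i=1}^{k} 4i^{2} = \tfrac{2}{3}k(k+1)(2k+1)$, a short algebraic expansion collapses the bracket to
\[
(k+1)\bigl(N-2k^{2}-2k-1\bigr)+\sum_{i=1}^{k}4i^{2} \;=\; (k+1)N-\tfrac{1}{3}\bigl(2k^{3}+6k^{2}+7k+3\bigr).
\]
Plugging in $k = \sqrt{N/2}+O(1)$, the term $(k+1)N$ contributes $\tfrac{\sqrt 2}{2}N^{3/2}$ at leading order and the term $\tfrac{2}{3}k^{3}$ contributes $\tfrac{\sqrt 2}{6}N^{3/2}$, while the $k^{2}$ term, the $k$ term, the constant, and every cross term generated by the $O(1)$ slack in $k$ are all $O(N)$. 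Subtracting, the bracket equals $\tfrac{\sqrt 2}{3}N^{3/2}+O(N)$, and therefore
\[
A(G)\;\geq\;\frac{\tfrac{\sqrt 2}{3}N^{3/2}+O(N)}{N-1}\;=\;\frac{\sqrt 2}{3}\sqrt{N}+O(1)\;=\;\frac{\sqrt{2N}}{3}+O(1),
\]
so the ratio of the right-hand side of~\Cref{eq:vs4lb} to $\sqrt{2N}/3$ tends to $1$, which is exactly the claimed convergence (and $\tfrac{\sqrt 2}{3}\approx 0.47$).

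The one step deserving care is the bookkeeping of the $O(1)$ error in $k$ introduced by the floor function: it enters $(k+1)N$ as an $O(N)$ perturbation of the bracket, which would be fatal if one were after a second-order correction, but it is harmless here because we only compare against the $\Theta(\sqrt N)$ leading term after dividing by $N-1$. I would therefore keep the $O(1)$ term in $k$ explicit throughout rather than simply writing $k\approx\sqrt{N/2}$, and check termwise that no sub-leading contribution survives the division.
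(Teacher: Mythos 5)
Your proof is correct and is exactly the ``simple end behavior analysis'' that the paper invokes without spelling out: the algebraic collapse of the bracket to $(k+1)N-\tfrac{1}{3}(2k^{3}+6k^{2}+7k+3)$ checks out, the leading terms $\tfrac{\sqrt{2}}{2}N^{3/2}$ and $\tfrac{\sqrt{2}}{6}N^{3/2}$ combine to $\tfrac{\sqrt{2}}{3}N^{3/2}$, and your careful tracking of the $O(1)$ floor-function slack in $k$ is the right way to justify that only the leading term survives division by $N-1$. No gaps; this matches the paper's intended argument, just with the details filled in.
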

A simple end behavior analysis of~\Cref{eq:vs4lb} leads to~\Cref{cor:lim_vs4lb}.
The best possible ASPL performance using the standard mesh grid topology is $\sim 0.5\sqrt{N}$, assuming a square $n_s\times n_s$ constellation~\cite{stojmenovic1997honeycomb}. Although the ASPL lower bound appears close to the mesh grid ASPL performance, we will show in~\Cref{sec:results} that the mesh grid performs strictly worse than the ASPL lower bound for any sufficiently large constellation.

The value $k$ found in the proof of~\Cref{cl:vs4lb} is the largest level that is packed to capacity. Therefore the diameter is at least $k$, and it is straightforward to find that the diameter lower bound is
\begin{align}\label{eq:diam_lb}
    D_{LB}=\bigg\lceil \frac{-1+\sqrt{2N-1}}{2} \bigg\rceil\,.
\end{align}
% If a topology does not have the minimum diameter, then it would not be possible to have the minimum ASPL.
\subsection{Degree 3 Vertex-Symmetric Topologies}
Next, we develop a lower bound for the case $\Delta=3$. Similar to the line of reasoning for $\Delta=4$, we first develop an upper bound on the number of distinct nodes reachable in exactly $\ell$ hops. The proof of the following is omitted for brevity.
\begin{claim}
    In a $\Delta$-regular vertex-symmetric graph where $\Delta$ is odd-valued, the maximum number of nodes that can be reached by a root vertex with a path length of $\ell$ hops cannot exceed $Y$, where
    \begin{align}\label{eq:odd_max_packing}
        Y=\sum_{i=1}^t \binom{\Delta}{i}\binom{\lceil \ell/2\rceil -1}{i-1}\left[\sum_{j=1}^{w_i} \binom{\Delta-i}{j}\binom{\lfloor \ell/2\rfloor-1}{j-1}\right]\, ,
    \end{align}
    where $t=\min(\lceil \ell/2\rceil ,\Delta)$ and $w_i=\min(\lfloor \ell/2\rfloor,\Delta-i)$
\end{claim}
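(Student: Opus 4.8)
The plan is to run the same kind of ``greedy level-packing'' argument used for~\Cref{lem:boesch} and~\Cref{cl:vs4lb}, but carried out on the bipartite structure forced by the odd-degree notion of symmetry. Recall that $V$ is split into $V_R$ and $V_B$, that the edge leaving $\*v\in V_B$ along jump $\*e_k$ reaches $\*v\oplus\*e_k\in V_R$, and that the edge leaving $\*u\in V_R$ along $\*e_k$ reaches $\*u\ominus\*e_k\in V_B$; in particular $G$ is bipartite with parts $V_B,V_R$. Hence any walk $\*u_0,\*u_1,\dots,\*u_\ell$ from a root $\*u_0\in V_B$ alternates ``$\oplus$'' and ``$\ominus$'' steps, beginning with ``$\oplus$'', and if $k_t$ is the jump type of the $t$-th edge then by commutativity of ``$\oplus$'' its endpoint is $\*u_\ell=\*u_0\oplus\big(\sum_{t\text{ odd}}\*e_{k_t}-\sum_{t\text{ even}}\*e_{k_t}\big)$. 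Among the positions $1,\dots,\ell$ there are $\lceil\ell/2\rceil$ odd ones (the ``$\oplus$'' steps) and $\lfloor\ell/2\rfloor$ even ones (the ``$\ominus$'' steps). If instead $\*u_0\in V_R$, the relabeling $\*e_k\mapsto-\*e_k$ (which swaps $V_R$ and $V_B$) reduces to the previous situation and leaves the count unchanged, so we may assume $\*u_0\in V_B$.

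Next I would establish the structural fact that drives the count: any length-$\ell$ walk from the root to a node $\*w$ at distance exactly $\ell$ uses each jump $\*e_k$ with only one sign, i.e.\ never at both an odd and an even position. Suppose some jump $\*e_k$ were used $p_k\ge 1$ times at odd positions and $m_k\ge 1$ times at even positions. Lowering both counts by one changes the net coefficient $p_k-m_k$ of no jump, but lowers the total number of ``$\oplus$'' uses to $\lceil\ell/2\rceil-1$ and of ``$\ominus$'' uses to $\lfloor\ell/2\rfloor-1$; re-interleaving these multiplicities into odd and even positions yields, again by commutativity of ``$\oplus$'', a legitimate alternating walk of length $\ell-2$ from $\*u_0$ to $\*w$, contradicting $\operatorname{dist}(\*u_0,\*w)=\ell$. (For $\ell\le 1$ there is nothing to check.)

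With this in hand, every node at distance exactly $\ell$ is the endpoint of a length-$\ell$ walk described by a ``profile'': a set $P\subseteq\{1,\dots,\Delta\}$ of plus-jumps with positive multiplicities summing to $\lceil\ell/2\rceil$, together with a disjoint set $M\subseteq\{1,\dots,\Delta\}\setminus P$ of minus-jumps with positive multiplicities summing to $\lfloor\ell/2\rfloor$; since a profile determines the endpoint (the interleaving order is irrelevant), the number of distance-$\ell$ nodes is at most the number of profiles, and coincidences among profiles only sharpen the bound. Counting profiles by $i=|P|$ and $j=|M|$: $\binom{\Delta}{i}$ choices of $P$, times $\binom{\lceil\ell/2\rceil-1}{i-1}$ ordered positive compositions of $\lceil\ell/2\rceil$ into $i$ parts, times $\binom{\Delta-i}{j}$ choices of $M$ from the remaining jumps, times $\binom{\lfloor\ell/2\rfloor-1}{j-1}$ compositions of $\lfloor\ell/2\rfloor$; summing over $1\le i\le\min(\lceil\ell/2\rceil,\Delta)=t$ and $1\le j\le\min(\lfloor\ell/2\rfloor,\Delta-i)=w_i$ reproduces exactly~\Cref{eq:odd_max_packing}. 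The degenerate regimes $\ell\in\{0,1\}$ (bound $1$ and $\Delta$) are checked directly, and an empty inner sum correctly records that a profile using all $\Delta$ jumps with a plus sign leaves no room for minus-jumps.

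I expect the reduction step in the second paragraph to be the only real obstacle: one has to be sure that after deleting a matched pair of opposite-sign uses of a jump and re-interleaving, the resulting sequence is an actual walk in $G$ --- the bipartite alternation is preserved and the endpoint is unchanged --- not merely a manipulation of formal coefficient vectors. The remaining work is the composition bookkeeping, together with a careful statement of the index ranges $t$ and $w_i$ and of the small-$\ell$ conventions, all of which are routine.
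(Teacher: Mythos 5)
Your proof is correct and is essentially the argument that the paper's formula encodes (the paper itself omits the proof): the bipartite alternation forces $\lceil \ell/2\rceil$ ``plus'' and $\lfloor \ell/2\rfloor$ ``minus'' steps, the cancellation argument shows a shortest walk uses disjoint plus- and minus-jump sets, and the composition count over $(i,j)$ reproduces~\Cref{eq:odd_max_packing} term by term; this parallels how~\Cref{lem:boesch} is obtained in the even-degree case. The only caveat worth recording is the one you already flag: for $\ell=1$ the displayed expression literally evaluates to $0$ (since $w_i=0$ empties the inner sum) rather than to the correct bound $\Delta$, so the $\ell\le 1$ cases must be stated as separate conventions rather than ``checked by plugging in.''
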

As a corollary to the previous claim, we can determine the maximum number of nodes that can be reached in exactly $\ell$ hops when $\Delta=3$:
\begin{corollary}\label{cl:deg3_max_packing}
    For $\Delta=3$, the maximum number of nodes that can be reached by a vertex with a path length of exactly $\ell$ hops cannot exceed $Y$, where
    $Y=3\ell\,.$
\end{corollary}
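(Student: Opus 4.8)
The plan is to obtain \Cref{cl:deg3_max_packing} as a direct specialization of \eqref{eq:odd_max_packing} with $\Delta=3$, so the argument is a bookkeeping simplification rather than a new combinatorial idea, mirroring the step that turned \Cref{lem:boesch} into the bound ``$4\ell$'' in the degree-$4$ case. Substituting $\Delta=3$, the outer index runs to $t=\min(\lceil\ell/2\rceil,3)$ and the inner cutoffs are $w_i=\min(\lfloor\ell/2\rfloor,3-i)$. The first observation I would make is that the $i=3$ term never contributes once $\ell\geq 2$: there $w_3=\min(\lfloor\ell/2\rfloor,0)=0$, so the bracketed inner sum $\sum_{j=1}^{0}(\cdot)$ is empty and vanishes. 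Hence only the $i=1$ and $i=2$ terms matter for $\ell\geq 2$, and the single remaining case $\ell=1$ gives $Y=3=3\ell$ trivially because a degree-$3$ vertex has exactly three neighbors.

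Next I would evaluate the two surviving terms. For $i=1$ the inner factor is $\sum_{j=1}^{w_1}\binom{2}{j}\binom{\lfloor\ell/2\rfloor-1}{j-1}$ with $w_1=\min(\lfloor\ell/2\rfloor,2)$; since $\binom{2}{j}=0$ for $j\geq 3$ this equals $\binom{2}{1}\binom{\lfloor\ell/2\rfloor-1}{0}+\binom{2}{2}\binom{\lfloor\ell/2\rfloor-1}{1}=\lfloor\ell/2\rfloor+1$, and one checks this value persists in the borderline cases $\ell\in\{2,3\}$ where $\binom{\lfloor\ell/2\rfloor-1}{1}=0$ anyway. Multiplying by the outer factor $\binom{3}{1}\binom{\lceil\ell/2\rceil-1}{0}=3$ gives $3(\lfloor\ell/2\rfloor+1)$. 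For $i=2$ we have $w_2=\min(\lfloor\ell/2\rfloor,1)=1$ for all $\ell\geq 2$, so the inner factor collapses to $\binom{1}{1}\binom{\lfloor\ell/2\rfloor-1}{0}=1$, and the outer factor $\binom{3}{2}\binom{\lceil\ell/2\rceil-1}{1}=3(\lceil\ell/2\rceil-1)$ yields the term $3(\lceil\ell/2\rceil-1)$; this is automatically zero for $\ell=2$, consistent with $t=1$ there.

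Adding the contributions gives $Y=3(\lfloor\ell/2\rfloor+1)+3(\lceil\ell/2\rceil-1)=3(\lfloor\ell/2\rfloor+\lceil\ell/2\rceil)=3\ell$, which is the claimed bound. I would close by stating the tiny cases $\ell\in\{1,2\}$ explicitly so the uniform identity is seen to hold there as well. The only real obstacle is organizing the boundary behavior of $t$ and $w_i$ for small $\ell$ so that $Y=3\ell$ comes out cleanly in every case; this is shallow but off-by-one–prone in the floors and ceilings, so I would verify $\ell\le 2$ by hand and let the general computation cover $\ell\ge 3$.
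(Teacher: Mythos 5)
Your proof is correct and takes essentially the same route as the paper's: both specialize \Cref{eq:odd_max_packing} to $\Delta=3$ and simplify the surviving terms to $3\lfloor\ell/2\rfloor+3\lceil\ell/2\rceil=3\ell$. Your term-by-term evaluation is merely more explicit about the small-$\ell$ boundary cases, which the paper disposes of by asserting direct verification for $\lfloor\ell/2\rfloor<3$.
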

\begin{proof}
    For $\lfloor \ell/2\rfloor<3$ one can directly verify by plugging in the value of $\ell$ and $\Delta=3$ into~\Cref{eq:odd_max_packing}. For $\lfloor \ell/2\rfloor\geq3$ and $\Delta=3$, \Cref{eq:odd_max_packing} evaluates to $Y=3\lceil \ell/2 \rceil + 3\lfloor \ell/2 \rfloor$. If $\ell$ is odd, then $Y=3\frac{\ell+1}{2}+3\frac{\ell-1}{2}=3\ell$. If $\ell$ is even, $Y$ again reduces to $3\ell$.
\end{proof}
Recall the maximum number of nodes that can be reached in exactly $\ell$ hops for the vertex-symmetric $\Delta=4$ case was $4\ell$; now with $\Delta=3$, the maximum is $3\ell$. 
Using~\Cref{cl:deg3_max_packing}, we find the maximum number of unique nodes that can be reached in up to $h$ hops with degree 3, which is
\begin{align}\label{eq:deg3_max_coverage}
    1+\sum_{\ell=1}^h3\ell=\frac{3}{2}h^2+\frac{3}{2}h+1\,.
\end{align}
 Given~\Cref{eq:deg3_max_coverage}, the lower bound on ASPL for $\Delta=3$ can be found with the same analysis presented in~\Cref{cl:vs4lb} to arrive at the following:
\begin{claim}\label{cl:vs3lb}
     For $\Delta=3$, the lower bound on ASPL for vertex-symmetric graph $G$ is 
    $$A(G)\geq\frac{1}{N-1}\left[(k+1)\left(N-\frac{3}{2}k^2-\frac{3}{2}k-1\right)+\sum_{i=1}^k 3i^2\right]\,,$$
    where $k=\bigg\lfloor \frac{-1+\sqrt{\frac{8}{3}N-\frac{5}{3}}}{2} \bigg\rfloor\,.$
\end{claim}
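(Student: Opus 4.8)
The plan is to rerun the argument behind \Cref{cl:vs4lb} essentially verbatim, with the degree-4 per-level capacity $4\ell$ replaced by the degree-3 capacity $3\ell$ supplied by \Cref{cl:deg3_max_packing}. Let $G$ be a $3$-regular vertex-symmetric graph, and for a fixed root let $x_i$ be the number of vertices at shortest-path distance exactly $i$. Vertex symmetry makes this count independent of the root, so $A(G)=\frac{1}{N-1}\sum_{i=1}^{N-1} i\,x_i$ (the bound is vacuous if $G$ is disconnected, so assume connectivity). The vector $(x_1,\dots,x_{N-1})$ is feasible for the linear program that minimizes $\frac{1}{N-1}\sum_{i=1}^{N-1} i\,x_i$ subject to $0\le x_i\le 3i$ and $\sum_{i=1}^{N-1} x_i=N-1$: the upper box constraints hold by \Cref{cl:deg3_max_packing}, and the equality holds because a connected $\Delta$-regular graph on $N$ vertices reaches all $N-1$ other vertices. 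Hence the optimal LP value lower-bounds $A(G)$, and it remains to evaluate that optimum.

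Next I would show this LP is solved by the greedy ``fill the nearest levels first'' assignment, exactly as in \Cref{cl:vs4lb}: set $x_i=3i$ for $i=1,\dots,k$, place the residual mass in level $k+1$, and set all other variables to zero, where $k$ is the largest index for which the cumulative capacity does not exceed $N-1$. For optimality and uniqueness I would use the same exchange argument: any feasible point other than the greedy one has $x'_i<3i$ for some $i\le k$ and, to satisfy $\sum_i x'_i=N-1$, must carry the resulting deficit at some strictly larger level; transferring that deficit down to level $i$ strictly decreases $\sum_i i\,x_i$, contradicting optimality.

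It then remains to identify $k$ and substitute. By \Cref{eq:deg3_max_coverage} the cumulative capacity of levels $1$ through $h$ is $\sum_{\ell=1}^h 3\ell=\tfrac32 h^2+\tfrac32 h$, so the condition $1+\sum_{\ell=1}^h 3\ell\le N$ is equivalent to $\tfrac32 h^2+\tfrac32 h+1-N\le 0$, i.e. $h\le\tfrac{-1+\sqrt{\tfrac83 N-\tfrac53}}{2}$; taking the floor gives the stated $k$. Putting $x_i=3i$ for $i\le k$ contributes $\sum_{i=1}^k 3i^2$ to $\sum_i i\,x_i$, while the residual $N-1-\sum_{i=1}^k 3i=N-\tfrac32 k^2-\tfrac32 k-1$ placed at level $k+1$ contributes $(k+1)\bigl(N-\tfrac32 k^2-\tfrac32 k-1\bigr)$; dividing by $N-1$ yields the claimed inequality.

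The only delicate points — and the nearest thing to an obstacle — are bookkeeping ones: checking from the definition of $k$ that the residual $N-\tfrac32 k^2-\tfrac32 k-1$ is nonnegative and at most $3(k+1)$, so it genuinely fits in the single level $k+1$ and the greedy solution does not spill into level $k+2$; and making sure the floor in the definition of $k$ does not cause an off-by-one in the boundary case where $N-1$ equals a cumulative capacity exactly. Both are routine and were already dispatched in the appendix proof of \Cref{cl:vs4lb}, so the same verification carries over with $4$ replaced by $3$.
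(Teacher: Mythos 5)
Your proposal is correct and follows the same route as the paper: the paper's proof of \Cref{cl:vs3lb} likewise invokes the LP/greedy-packing argument of \Cref{cl:vs4lb} with the per-level capacity $4i$ replaced by $3i$, and determines $k$ from the inequalities $\frac{3}{2}k^2+\frac{3}{2}k+1\leq N<\frac{3}{2}(k+1)^2+\frac{3}{2}(k+1)+1$, exactly as you do. The bookkeeping checks you flag (nonnegativity of the residual and that it fits in level $k+1$) do follow from those defining inequalities, so there is no gap.
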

\begin{proof}
    Similar to proof of~\Cref{cl:vs4lb}, solving the linear program with upper bound $3i$ in~\Cref{eq:capacity_constraint}, and modifying~\Cref{eq:deg3_max_coverage}.
    We solve for $k$ to satisfy the inequalities
    \begin{align}
        \frac{3}{2}k^2+\frac{3}{2}k+1\leq N< \frac{3}{2}(k+1)^2+\frac{3}{2}(k+1)+1\, ,
    \end{align}
    which leads to the desired result.
\end{proof}

\begin{corollary}
    The lower bound in~\Cref{cl:vs3lb} converges to
        $\frac{\sqrt{\frac{8}{3}N}}{3}\approx 0.54\sqrt{N}$ as $N\to\infty$.
\end{corollary}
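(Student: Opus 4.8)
The plan is to perform an end-behavior analysis of the right-hand side of the bound in \Cref{cl:vs3lb}, exactly paralleling how \Cref{cor:lim_vs4lb} is obtained from \Cref{cl:vs4lb}. First I would do the algebraic cleanup: substitute $\sum_{i=1}^{k}3i^2=\tfrac{k(k+1)(2k+1)}{2}$ into the bound and collect terms. I expect the bracketed quantity to collapse to $(k+1)\bigl(N-1-\tfrac{k(k+2)}{2}\bigr)$, so that the bound rewrites compactly as $A(G)\ge \dfrac{(k+1)\bigl(2(N-1)-k(k+2)\bigr)}{2(N-1)}$; having a single fraction in $N$ and $k$ makes the asymptotics transparent.

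Second, I would pin down the growth of $k$. The proof of \Cref{cl:vs3lb} already supplies $\tfrac32 k^2+\tfrac32 k+1\le N<\tfrac32(k+1)^2+\tfrac32(k+1)+1$, which rearranges to $k(k+1)\le \tfrac23(N-1)<(k+1)(k+2)$. This sandwich gives $k=\sqrt{\tfrac23 N}+O(1)$ and, crucially, $k^2=\tfrac23 N+O(\sqrt N)$. Substituting into the compact form, $2(N-1)-k(k+2)=\tfrac43 N+O(\sqrt N)$, so the numerator is $\bigl(\sqrt{\tfrac23 N}+O(1)\bigr)\bigl(\tfrac43 N+O(\sqrt N)\bigr)=\tfrac43\sqrt{\tfrac23}\,N^{3/2}+O(N)$; dividing by $2(N-1)=2N+O(1)$ and letting $N\to\infty$ leaves $\tfrac23\sqrt{\tfrac23}\,\sqrt N$. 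A one-line check of the identity $\tfrac23\sqrt{\tfrac23}=\tfrac13\sqrt{\tfrac83}$ then matches the stated limit $\tfrac{\sqrt{8N/3}}{3}\approx 0.54\sqrt N$.

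The only delicate point --- worth writing out rather than waving away --- is the $O(1)$ oscillation that the floor in the definition of $k$ introduces. It feeds an $O(\sqrt N)$ error into the numerator through both the $(k+1)$ factor and the residual $2(N-1)-k(k+2)$, so one must confirm that after division by $N-1$ this is only an $O(1)$ correction, hence negligible against the $\Theta(\sqrt N)$ leading term and irrelevant to its coefficient. Everything else is routine bookkeeping of lower-order terms, so I would state the sandwich on $k$ and the resulting error bounds explicitly and leave the remaining arithmetic implicit, matching the level of detail the excerpt uses for the analogous degree-4 corollary.
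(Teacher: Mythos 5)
Your proposal is correct and follows exactly the route the paper intends: the paper offers no written proof for this corollary, justifying it (as with the degree-4 analogue) by a ``simple end behavior analysis,'' which is precisely the computation you carry out. Your algebra checks out --- the bracket does collapse to $(k+1)\bigl(N-1-\tfrac{k(k+2)}{2}\bigr)$, the sandwich $k(k+1)\le\tfrac{2}{3}(N-1)<(k+1)(k+2)$ gives $k=\sqrt{2N/3}+O(1)$, and $\tfrac{2}{3}\sqrt{2/3}=\tfrac{1}{3}\sqrt{8/3}\approx 0.54$ --- so you have simply made explicit what the paper leaves implicit.
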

As a result of both~\Cref{cl:vs4lb} and~\Cref{cl:vs3lb}, we see that the ASPL lower bound scaling for vertex-symmetric topologies with degree 3 and 4 is $\Theta(\sqrt{N})$.
\subsection{General Regular Topologies}

We minimize ASPL in the general regular case by maximizing the number of nodes at each level.
For $h\geq 1$, the maximum number of nodes reached in exactly $h$-hops would be $\Delta(\Delta-1)^{h-1}$, which grows exponentially with $h$. 
Originally shown in~\cite{cerf1974lower}, the lower bound on ASPL for general regular topologies is restated here.
\begin{theorem}[Corollary 1 from~\cite{cerf1974lower}]
    Let $f(i)=\Delta(\Delta-1)^{i-1}$. The lower bound on ASPL for a graph $G$ is given by
    \begin{dmath}\label{eq:moore_lb}
        A(G)\geq\frac{1}{N-1}\left[\sum_{i=1}^k if(i)+k\left(N -1-\sum_{j=1}^k f(j)\right)\right]\,,
    \end{dmath}
    where k is the largest positive integer such that $N -1-\sum_{j=1}^k f(j)\geq 0\, $ and given by
    $k=\bigg\lfloor\log_{\Delta-1}\left(\frac{(N+\Delta-1)(\Delta-2)}{\Delta}\right)\bigg\rfloor\,.$
\end{theorem}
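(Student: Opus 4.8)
The plan is to prove this exactly as~\Cref{cl:vs4lb} was proved, with the degree-$4$ vertex-symmetric packing bound replaced by the general Moore bound. First I would fix an arbitrary source vertex $u$ and let $x_i$ be the number of vertices at distance exactly $i$ from $u$. A breadth-first-search tree rooted at $u$ branches into at most $\Delta$ children at the root and at most $\Delta-1$ children at every deeper vertex, so $x_i\le f(i)=\Delta(\Delta-1)^{i-1}$ for all $i\ge 1$; cross edges can only make $x_i$ smaller. Hence the total distance $\sum_i i\,x_i$ from $u$ to the rest of the network is feasible for, and therefore at least the optimum of, the linear program that minimizes $\sum_i i\,x_i$ subject to $\sum_i x_i=N-1$ and $0\le x_i\le f(i)$ --- the program of~\Cref{eq:packing_objective}--\Cref{eq:ell_constraint} with capacity $f(i)$ substituted for $4i$ in~\Cref{eq:capacity_constraint}.

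Next I would reuse the exchange argument from~\Cref{cl:vs4lb}: this LP is solved by the greedy ``water-filling'' assignment that fills the nearest levels first. Letting $k$ be the largest integer with $\sum_{j=1}^k f(j)\le N-1$, the greedy solution puts $x_i=f(i)$ for $i\le k$ and the remaining $N-1-\sum_{j=1}^k f(j)$ vertices at distance $k+1$ or beyond, so $\sum_i i\,x_i\ge \sum_{i=1}^k i\,f(i)+(k+1)\bigl(N-1-\sum_{j=1}^k f(j)\bigr)\ge \sum_{i=1}^k i\,f(i)+k\bigl(N-1-\sum_{j=1}^k f(j)\bigr)$, the last inequality using only that the residual is nonnegative. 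Since the average shortest path length is the average over the $N$ choices of source $u$ of $\tfrac{1}{N-1}\sum_{v\ne u}d(u,v)$, and the bound just derived holds for every $u$, the same quantity lower-bounds $A(G)$, giving~\Cref{eq:moore_lb}.

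To obtain the closed form for $k$, I would evaluate the geometric sum $\sum_{j=1}^k f(j)=\Delta\,\tfrac{(\Delta-1)^k-1}{\Delta-2}$ (valid for $\Delta\ge 3$) and solve the defining inequality $\Delta\,\tfrac{(\Delta-1)^k-1}{\Delta-2}\le N-1$ for the largest admissible integer $k$, which after rearranging and applying $\log_{\Delta-1}(\cdot)$ and a floor yields the stated expression; this is routine algebra, and the degenerate case $\Delta=2$ (where the bound reduces to that of an $N$-cycle) is handled separately.

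The one genuinely nontrivial point --- and the step I would be most careful to state cleanly --- is the reduction from the single-source LP bound to the ASPL: because ASPL decomposes as an average of per-source total distances, each source independently obeys the same Moore capacity profile, so no joint or integer coordination across sources is needed and the common LP optimum transfers verbatim. Everything else --- greedy optimality of the packing and the logarithmic solution for $k$ --- is identical in structure to the argument already given for~\Cref{cl:vs4lb}.
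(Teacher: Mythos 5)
The paper never proves this statement: it is imported verbatim as Corollary~1 of Cerf et al.~\cite{cerf1974lower} (``Originally shown in\ldots restated here''), so there is no internal proof to compare yours against. That said, your derivation is correct and is the standard one: the per-level Moore/BFS capacity bound $x_i\le\Delta(\Delta-1)^{i-1}$, the greedy water-filling optimum of the resulting linear program (the same exchange argument the paper itself uses for~\Cref{cl:vs4lb}), and the observation that ASPL is an average of per-source distance totals, so the single-source LP bound transfers to $A(G)$. Your explicit note that the greedy optimum actually carries the stronger coefficient $(k+1)$ on the residual term, which you then relax to $k$ to match the stated form, is a correct and useful clarification of why the inequality as written holds.

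One concrete issue: the closed form for $k$ is not the ``routine algebra'' you assert. Solving $\sum_{j=1}^{k}\Delta(\Delta-1)^{j-1}=\Delta\frac{(\Delta-1)^k-1}{\Delta-2}\le N-1$ yields $k=\big\lfloor\log_{\Delta-1}\big(\frac{N(\Delta-2)+2}{\Delta}\big)\big\rfloor$, which agrees with the stated $\big\lfloor\log_{\Delta-1}\big(\frac{(N+\Delta-1)(\Delta-2)}{\Delta}\big)\big\rfloor$ for $\Delta=3$ but can disagree for $\Delta\ge 4$: for $N=16$, $\Delta=4$ the verbal definition (largest $k$ with $N-1-\sum_{j=1}^k f(j)\ge 0$) gives $k=1$, while the stated formula gives $k=2$. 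You should either present the expression your algebra actually produces or explicitly flag the mismatch with the quoted formula, rather than claim the two coincide.
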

Recall that in the vertex-symmetric case the scaling of ASPL is $\Theta\left(\sqrt{N}\right)$ for degree 3 and 4. Without the symmetry requirement, the general regular topology ASPL scaling improves to $\Theta\left(\log N\right)$.

A graph that achieves the lower bound in~\Cref{eq:moore_lb} is called a Generalized Moore Graph~\cite{cerf1974lower}.
While Generalized Moore Graphs exist for certain $\Delta$ and $N$, they are not guaranteed to exist for all degrees and sizes and are difficult to find when they do exist~\cite{dalfo2019survey,satotani2018depth}. Therefore in~\Cref{subsec:genregtop_results}, we use heuristics for finding general regular topologies with good ASPL performance.

Next, we analyze ASPL-optimal topologies and lower bound achievability.
%%%%%%%%%%%%%%%%%%
%% Results
%%%%%%%%%%%%%%%%%%
\section{Optimal Topology Design}\label{sec:results}
\subsection{Degree 4 Vertex-Symmetric Topologies}
Given a $n_s\times n_o$ constellation, we search for configurations that meet the ASPL lower bound. We first  show that for sufficiently large constellations, the mesh grid will have a strictly larger ASPL than the lower bound.

\begin{claim}\label{cl:grid_bigger_lb}
    For a $n_s\times n_o$ constellation with $n_s+n_o\geq 10$ and $\Delta=4$, the ASPL for the mesh grid topology is strictly greater than the vertex-symmetric lower bound.
\end{claim}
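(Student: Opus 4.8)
The plan is to notice that, by \Cref{lem:boesch}, the mesh grid's own distance distribution is a feasible point of the linear program that yields the lower bound in \Cref{cl:vs4lb}, so $A(G_{\mathrm{grid}})$ is automatically \emph{at least} that bound; the real work is to show the inequality is strict once $n_s+n_o\ge 10$, the obstruction being that torus wraparound keeps the grid from filling one of its levels to the full capacity of $4i$ vertices.

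By the row/column symmetry of the grid, assume $3\le n_s\le n_o$ (for $n_s\le 2$ the degree-$4$ mesh is not a simple graph). On the square torus the grid distance from the origin to $(a,b)$ is $\|a\|_{n_s}+\|b\|_{n_o}$ with $\|a\|_n=\min(a,\,n-a)$, so writing $c_i$ for the number of vertices at distance exactly $i$ and $C_h=\sum_{i=1}^{h}c_i$, a layer-cake rewriting gives $A(G_{\mathrm{grid}})=\tfrac{1}{N-1}\sum_{h\ge 0}(N-1-C_h)$, while \Cref{lem:boesch} gives $C_h\le\min\!\big(2h(h+1),\,N-1\big)=:X_h$. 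The sequence $X_h$ is exactly the running total of the greedy packing used to prove \Cref{cl:vs4lb}, so the bound in \Cref{cl:vs4lb} equals $\tfrac{1}{N-1}\sum_{h\ge 0}(N-1-X_h)$, and therefore $A(G_{\mathrm{grid}})$ minus that bound equals $\tfrac{1}{N-1}\sum_{h\ge 0}(X_h-C_h)\ge 0$. It then suffices to exhibit a single $h$ with $X_h>C_h$.

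The natural choice is $h=m+1$ with $m=\lfloor (n_s-1)/2\rfloor$. First I would count $c_{m+1}$ by convolving the two single-coordinate distance profiles: the profile on $\mathbb{Z}_{n_s}$ has at most one point at distance $m+1$ (one if $n_s$ is even, none if $n_s$ is odd), which makes layer $m+1$ fall at least one short of its ideal count, i.e.\ $c_{m+1}\le 4m+3<4(m+1)$. Combined with $C_m\le 2m(m+1)$ from \Cref{lem:boesch} this gives $C_{m+1}\le 2(m+1)(m+2)-1$. On the other hand, whenever the packing parameter $k$ of \Cref{cl:vs4lb} satisfies $k\ge m+1$ we have $X_{m+1}=2(m+1)(m+2)$, so $X_{m+1}-C_{m+1}\ge 1$, and hence $A(G_{\mathrm{grid}})$ exceeds the bound of \Cref{cl:vs4lb} by at least $\tfrac{1}{N-1}>0$.

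The remaining step, which I expect to be the main obstacle, is checking that $k\ge m+1$ under the hypothesis. Rearranging the definition of $k$, the condition $k\ge m+1$ is equivalent to $N\ge 2m^2+6m+5$, and since $2m\le n_s-1$ it is implied by $n_o\ge \tfrac{n_s}{2}+2+\tfrac{5}{2n_s}$. The bound $n_s+n_o\ge 10$ is precisely what forces this in the tight small cases $n_s\in\{3,4\}$, where it yields $n_o\ge 6$, while for $n_s\ge 5$ the weaker $n_o\ge n_s$ already suffices; this is an elementary case check, but it is where the constant $10$ is consumed. A purely computational alternative --- expanding $A(G_{\mathrm{grid}})=\frac{n_o\lfloor n_s^2/4\rfloor+n_s\lfloor n_o^2/4\rfloor}{N-1}$ and the bound of \Cref{cl:vs4lb} and comparing numerators --- would also work but is far messier, so I would present the packing-deficiency argument above.
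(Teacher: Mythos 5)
Your proof is correct, but it reaches the conclusion by a different route than the paper. The paper's argument is a diameter comparison: it computes $D(G_{MG})=\lfloor n_s/2\rfloor+\lfloor n_o/2\rfloor$, shows via an AM--GM-type inequality that this strictly exceeds the diameter lower bound $\lceil(-1+\sqrt{2N-1})/2\rceil$ once $n_s+n_o\ge 10$, and then invokes the uniqueness of the greedy packing (the LP optimizer has no nodes beyond level $k+1$, while the grid does) to conclude strictness. You instead make the ``feasible point of the LP'' logic explicit through the layer-cake identity $A(G)-A_{LB}=\frac{1}{N-1}\sum_{h}(X_h-C_h)$ and certify strictness by exhibiting one concretely deficient level, $h=m+1$ with $m=\lfloor(n_s-1)/2\rfloor$, using the exact single-coordinate distance profiles of the torus to show $c_{m+1}\le 4m+3<4(m+1)$. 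The paper's proof is shorter and needs only the grid's diameter; yours is more quantitative --- it gives an explicit gap of at least $\frac{1}{N-1}$ from a single level, and the telescoping form would let one accumulate deficiencies across levels to bound the gap from below by $\Theta(\sqrt{N})$, consistent with $0.5\sqrt N$ versus $0.47\sqrt N$ in \Cref{cor:lim_vs4lb}. Both proofs consume the hypothesis $n_s+n_o\ge 10$ in a small case check (the paper in its step $(*)$, you in verifying $k\ge m+1$, i.e.\ $N\ge 2m^2+6m+5$); I confirmed your check goes through, including the boundary case $n_s=n_o=5$ where $N=25=2m^2+6m+5$ exactly. One trivial slip: for $n_s=3$ the hypothesis gives $n_o\ge 7$, not $6$, but since you only need $n_o\ge 5$ there, nothing breaks. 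Your exclusion of $n_s\le 2$ is also consistent with the paper, whose diameter formula likewise presupposes the grid is a simple $4$-regular graph.
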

\begin{proof}
    The diameter of the mesh grid is $$D(G_{MG})=\bigg\lfloor\frac{n_s}{2}\bigg\rfloor + \bigg\lfloor\frac{n_o}{2}\bigg\rfloor\,,$$

    whereas the diameter of a vertex-symmetric graph that would meet the lower bound performance would be $$D(G_{LB})=\bigg\lceil\frac{-1+\sqrt{2N-1}}{2} \bigg\rceil\,.$$

    Based on these relations, we get the following chain of inequalities:
    \begin{align*}
        D(G_{LB})&=\bigg\lceil\frac{-1+\sqrt{2N-1}}{2} \bigg\rceil  \stackrel{(*)}{<}\bigg\lceil \frac{n_s}{2}+\frac{n_o}{2}-1 \bigg\rceil\\
        &\leq \bigg\lfloor \frac{n_s}{2} \bigg \rfloor +\bigg\lfloor\frac{n_o}{2}\bigg\rfloor=D(G_{MG})\,.
    \end{align*}
It is straightforward to show that $(*)$ holds when $n_s+n_o\geq 10$. 
Since for $n_s+n_o\geq 10$ the diameter of the mesh grid topology is strictly greater than that of the diameter lower bound, we can deduce that the mesh grid cannot achieve the maximum packing required to minimize the objective in~\Cref{eq:packing_objective}. Hence, the mesh grid has a strictly larger ASPL. 
\end{proof}
Based on~\Cref{cl:grid_bigger_lb}, the standard mesh-grid configuration is not the best vertex-symmetric topology for optimal ASPL. 
Therefore, we explore conditions for which the ASPL lower bound is achieved exactly. If $n_s$ and $n_o$ are co-prime, there exists a construction for which the ASPL lower bound is achieved. 

% \textcolor{red}{While the author of~\cite{monakhova1981diophantine} claims to have proved the ASPL-optimal circulant graph exists, we have not been able to verify. Therefore, we independently prove the result and present it in the following.}
% the m,n co-prime claim
\begin{claim}\label{cl:mncoprime}
    If $n_s$ and $ n_o$ are co-prime, then there exists a vertex-symmetric topology that achieves the ASPL lower bound.
\end{claim}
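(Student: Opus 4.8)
The plan is to pass, via the Chinese Remainder Theorem, from the torus topology problem to a problem about sublattices of $\mathbb{Z}^2$, and then to exhibit a lattice whose $\ell^1$-distance distribution matches the (unique) optimum of the linear program in the proof of \Cref{cl:vs4lb}. Since $\gcd(n_s,n_o)=1$, the map $x\mapsto(x\bmod n_s,\,x\bmod n_o)$ is a group isomorphism $\mathbb{Z}_N\to\mathbb{Z}_{n_s}\times\mathbb{Z}_{n_o}$, so $\mathbb{Z}_{n_s}\times\mathbb{Z}_{n_o}$ is cyclic of order $N$ and is generated by $(1,1)$. Consequently, any sublattice $L\subseteq\mathbb{Z}^2$ of index $N$ whose quotient $\mathbb{Z}^2/L$ is cyclic gives rise to a vertex-symmetric degree-$4$ torus topology: transport the standard mesh Cayley structure (connection set $\{\pm(1,0),\pm(0,1)\}$) on $\mathbb{Z}^2/L\cong\mathbb{Z}_N\cong\mathbb{Z}_{n_s}\times\mathbb{Z}_{n_o}$ through the isomorphism and read off the images of $(1,0)$ and $(0,1)$ as the two jumps $\mathbf{e}_1,\mathbf{e}_2\in\mathbb{Z}^2$ (equivalently, this is the circulant $C_N(1,s)$ for the appropriate $s$). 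Since the lattices we consider will have $\ell^1$-minimum at least $3$, the resulting graph is simple and $4$-regular, so it suffices to construct a suitable $L$.

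Next I would translate ASPL-optimality into a lattice condition. The proof of \Cref{cl:vs4lb} shows the linear program has the unique optimum $x_\ell=4\ell$ for $1\le\ell\le k$ and $x_{k+1}=N-2k^2-2k-1$ (all other $x_\ell=0$), with $k=\lfloor(-1+\sqrt{2N-1})/2\rfloor$; hence a $\Delta=4$ graph attains the bound iff exactly $4\ell$ vertices lie at distance $\ell$ for each $\ell\le k$ and the remaining $N-2k^2-2k-1$ vertices lie at distance $k+1$. For the topology coming from $L$, the distance from the root to the class of $(a,b)$ equals the $\ell^1$-distance from $(a,b)$ to $L$, so a short argument using the Boesch bound $|S_\ell|\le 4\ell$ of \Cref{lem:boesch} shows that its distance distribution equals the above optimum precisely when: (i) $L$ contains no nonzero vector of $\ell^1$-norm at most $2k$ (equivalently, the $\ell^1$-ball $B_k$ of radius $k$ embeds injectively into $\mathbb{Z}^2/L$, so the first $k$ spheres are full), and (ii) the $\ell^1$-covering radius of $L$ is at most $k+1$.

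It then remains to build a sublattice $L\subseteq\mathbb{Z}^2$ of index $N$ with cyclic quotient satisfying (i) and (ii). Put $r=N-(2k^2+2k+1)$, so $0\le r\le 4k+3$ by the definition of $k$. For $r=0$ take $L=\langle(k+1,k),\,(-k,k+1)\rangle$: its index is $(k+1)^2+k^2=2k^2+2k+1=N$, its $\ell^1$-minimum is $2k+1$, and the Lee balls of radius $k$ about its points tile $\mathbb{Z}^2$ (the classical perfect two-dimensional Lee tiling), so its covering radius is $k$; this corresponds to $s=2k+1$. For $r\ge1$ perturb this lattice, taking $L=\langle(k+1,k),\,(-k,k+1)+\mathbf{w}_r\rangle$ with $\mathbf{w}_r\in\mathbb{Z}^2$ chosen so that $|\det|$ equals $N$ exactly --- possible because $\gcd(k,k+1)=1$ lets one realize any prescribed increment of the index --- while keeping the $\ell^1$-minimum at least $2k+1$ and the covering radius at most $k+1$. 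In all cases the quotient is automatically cyclic, since the basis entries include the coprime pair $k,k+1$, so by the first step $L$ yields a valid vertex-symmetric topology, and by the second step that topology attains the ASPL lower bound.

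The main obstacle is carrying out the perturbation for every excess $r\in\{1,\ldots,4k+3\}$: simultaneously preserving the packing condition (i) and the covering condition (ii) as the index is pushed from $2k^2+2k+1$ up toward $2(k+1)^2+2(k+1)+1$. The $r=0$ case is the clean perfect Lee tiling, but for $r\ge1$ the fundamental region is an L-shaped tile rather than a perfect diamond, and proving that its parameters (equivalently $\mathbf{w}_r$, equivalently $s$) can always be chosen so that radius-$k$ Lee balls still pack and radius-$(k+1)$ Lee balls still cover is the crux. This is precisely the problem of constructing average-distance-optimal degree-$4$ circulant (double-loop) networks, and the argument will likely split into a few cases according to the size of $r$, or else invoke known constructions of such networks.
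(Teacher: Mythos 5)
Your reduction is the same as the paper's: the coprimality of $n_s$ and $n_o$ is used (via CRT) to identify the torus with $\mathbb{Z}_N$ and pass to circulant graphs, exactly as in the paper's \Cref{lem:circulant}; and your packing/covering reformulation (first $k$ spheres full, covering radius at most $k+1$) is a correct restatement of when the unique LP optimum from \Cref{cl:vs4lb} is attained. The problem is that you only complete the construction in the perfect case $r=0$, i.e.\ $N=2k^2+2k+1$, where the Lee-sphere tiling lattice $\langle(k+1,k),(-k,k+1)\rangle$ works. For generic $N$ that case never occurs, and the entire content of the claim lives in the range $1\le r\le 4k+3$. You acknowledge this yourself: the "perturbation" $\mathbf{w}_r$ is asserted to exist with the packing and covering properties preserved, but no such $\mathbf{w}_r$ is exhibited and no argument is given that one exists for every $r$. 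As written, this is a genuine gap, not a routine verification --- preserving $\ell^1$-minimum $\ge 2k+1$ while increasing the index and keeping covering radius $\le k+1$ is precisely the hard part.

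The paper closes this gap differently and more directly: rather than perturbing the perfect lattice, it fixes the jump set $\{w,w+1\}$ for the circulant, with $w$ chosen so that $\tfrac{4w^2+1}{2}\le N\le 2(w+1)^2$, and proves by hand that the representations $x=aw+b(w+1)$ with $|a|+|b|\le w$ are pairwise distinct (using $\gcd(w,w+1)=1$ to show the only valid coefficient pair has $p=0$ in the parametrization $(a+p(w+1),\,b-pw)$). This yields $2w^2+2w+1$ distinct residues within $w$ hops and, when $N$ exceeds that, an additional $2w+2$ distinct residues at distance $w+1$, totalling $2(w+1)^2\ge N$ --- one construction covering the whole range of $N$ with no case split on $r$. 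If you want to salvage your lattice formulation, you would need to either carry out the perturbation analysis for every $r$ (which amounts to re-deriving known optimal double-loop network constructions, as you suspect) or switch to the paper's fixed jump set and counting argument.
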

\begin{proof}[Proof Sketch]
    The complete proof is in the appendix. We prove by construction. First, we construct a circulant graph of size $N$ that meets the degree 4 ASPL lower bound. Then, we show there exists an isomorphism between the circulant graph and an equivalent vertex-symmetric graph. Being isomorphic, the circulant and vertex-symmetric graphs have the same ASPL.
\end{proof}
\Cref{cl:mncoprime} suggests that the degree-4 ASPL lower bound is indeed tight depending on the constellation parameters. However, requiring $n_s$ and $n_o$ to be co-prime is a restrictive condition, so we investigate alternative constructions. 
We now show that as long as $n_s$ is a multiple of $n_o$ with some conditions on $n_s/n_o$, there exist topologies that achieve the lower bound in ASPL performance, and the ISL link ranges required to construct such a topology are comparable to the link ranges required for a mesh grid. Note that the following results can also hold for $n_o>n_s$ if the axes are switched for all the constructions.
\begin{claim}\label{cl:mcn}
    For $\frac{n_s}{n_o}=\frac{m^2}{2}$ for some positive integer $m$, there exists a vertex-symmetric topology which achieves the lower bound from~\Cref{cl:vs4lb}.
\end{claim}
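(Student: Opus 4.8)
The plan is to recast the problem in terms of Cayley graphs. A degree-$4$ vertex-symmetric topology on the $n_s\times n_o$ torus is exactly a Cayley graph on $T=\mathbb{Z}_{n_s}\times\mathbb{Z}_{n_o}$ with connection set $\{\*e_1,-\*e_1,\*e_2,-\*e_2\}$ for a jump set $\{\*e_1,\*e_2\}$; connectivity forces $\*e_1,\*e_2$ to generate $T$, and then the distance from $\*0$ to $\*v$ equals $\min\{|a|+|b|:a\*e_1+b\*e_2\equiv\*v\}$, i.e.\ the $\ell_1$-distance from the origin to the coset, mapping to $\*v$, of the lattice $\Lambda:=\{(a,b)\in\mathbb{Z}^2:a\*e_1+b\*e_2\equiv\*0\}$. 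Here $\Lambda$ is a sublattice of $\mathbb{Z}^2$ of covolume $N$ with $\mathbb{Z}^2/\Lambda\cong T$, and by vertex-transitivity $A(G)$ is a function of $\Lambda$ alone. So it suffices to exhibit a suitable $\Lambda$ and then read off a jump set realizing it.

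I would first record when the bound of \Cref{cl:vs4lb} is met. The linear program in its proof shows this happens iff level $i$ of the BFS tree holds exactly $4i$ nodes for $1\le i\le k$ and the remaining $N-(2k^2+2k+1)$ nodes sit at level $k+1$. Using the elementary identity that, for $\ell_1$-balls, $B_k-B_k=B_{2k}$, this is equivalent to two conditions on $\Lambda$: (i) $\Lambda$ has no nonzero vector of $\ell_1$-length at most $2k$ (so each of the first $k$ levels is packed to capacity), and (ii) the $\ell_1$-covering radius of $\Lambda$ is at most $k+1$ (so every leftover node lands at level $k+1$). Next I would feed in the hypothesis: from $n_s/n_o=m^2/2$ we get $2N=(mn_o)^2$; since $mn_o$ is necessarily even, $t:=mn_o/2\in\mathbb{Z}$ and $N=2t^2$, hence $k=t-1$ and the leftover count is exactly $2t-1$. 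A counting bound on $\ell_1$-ball sizes ($|B_{t-1}|\le N<|B_t|$) shows every covolume-$N$ lattice has $\ell_1$-minimum at most $2t$, so (i)--(ii) ask for a covolume-$2t^2$ lattice that is essentially $\ell_1$-extremal: $\ell_1$-minimum $\ge 2t-1$ and covering radius $\le t$.

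The model lattice is $tD_2:=\{(x,y)\in t\mathbb{Z}^2:x/t+y/t\text{ even}\}$, with covolume $2t^2$, $\ell_1$-minimum $2t$ and $\ell_1$-covering radius $t$; when $m=2$ it is exactly the kernel of the ``diagonal'' jump set $\{[1\ 1]^\top,[1\ -1]^\top\}$ on the $2n_o\times n_o$ torus, whose quotient $\mathbb{Z}^2/n_oD_2\cong\mathbb{Z}_{n_o}\times\mathbb{Z}_{2n_o}$ is indeed $T$ and whose two ISLs have length $\sqrt2$ lattice units, comparable to the mesh grid. So $m=2$ is immediate. The crux, and the step I expect to be the main obstacle, is general $m$: then $\mathbb{Z}^2/tD_2\cong\mathbb{Z}_t\times\mathbb{Z}_{2t}$ is usually not isomorphic to $T$, so $tD_2$ is not the kernel of any jump set and cannot be used as is. The fix is to peel off $g:=\gcd(n_s,n_o)$: since $g\mid\operatorname{lcm}(n_s,n_o)$, every valid $\Lambda$ satisfies $\Lambda\subseteq g\mathbb{Z}^2$, so $\Lambda=g\Lambda'$ where $\Lambda'$ has covolume $N/g^2=2(t/g)^2$ (one checks $t/g\in\mathbb{Z}$) and \emph{cyclic} quotient $\mathbb{Z}_{N/g^2}$. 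Conditions (i)--(ii) then scale by $g$ and become: for $j:=t/g$, find a circulant-type lattice $\Lambda'=\langle(1,s),(0,2j^2)\rangle$ whose $\ell_1$-minimum is $\ge 2j$ and whose $\ell_1$-covering radius is $\le j$ --- equivalently, a circulant $C_{2j^2}(1,s)$ attaining the degree-$4$ vertex-symmetric bound.

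Choosing the shift $s$ is a one-dimensional Diophantine problem, in the spirit of the circulant construction behind \Cref{cl:mncoprime} but more delicate, because the covolume $2j^2$ is twice a square rather than a generalized-Moore/Lee-sphere cardinality, so the target $\ell_1$-minimum $2j$ is exactly the Minkowski bound and there is no slack. I would carry this out by exhibiting $s$ explicitly (a shift of size $\Theta(j)$, e.g.\ $s=2j-1$, appears to work), verifying (i) by a short case analysis on $|a|\le 2j-1$ and (ii) by an interval-covering argument over the $2j+1$ translated intervals $\{\,(2j-1)(-x)+y:|y|\le j-|x|\,\}$, $|x|\le j$, whose total length is $2j^2+2j+1$; depending on how the bookkeeping goes I may split on the parity of $m$ and on a few small residues of $n_o$. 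Finally I would unwind $\Lambda=g\Lambda'$ into an explicit surjective jump set on the $n_s\times n_o$ torus; since one can keep the row-components $O(m^2)$ and the column-components $O(1)$, and $n_s\ge n_o$, its longest ISL has physical length $O(1/n_o)$, within a constant of the mesh grid, giving the comparability asserted in the surrounding discussion.
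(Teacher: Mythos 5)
Your lattice reformulation is correct and genuinely different from the paper's argument: the paper works directly with the explicit jump set $\{[1\ 0]^\top,[m-1\ \ 1]^\top\}$ on the $n_s\times n_o$ torus, showing by hand that distinct coefficient pairs of $\ell_1$-norm at most $k$ reach distinct nodes and then enumerating the $2t-1$ nodes at level $k+1$, whereas you characterize optimality by two metric conditions on the kernel lattice $\Lambda$ and reduce, via the Smith invariants, to a circulant problem. Your setup is right (conditions (i)--(ii), $k=t-1$ with $t=mn_o/2$, leftover count $2t-1$, and $\Lambda\subseteq g\mathbb{Z}^2$ all check out), and your candidate $s=2j-1$ does in fact work: its kernel is $\langle(j,j),(2j-1,-1)\rangle$, with $\ell_1$-minimum $2j$ and coprime entries, hence cyclic quotient. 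But two steps, as written, do not go through. First, the assertion that ``conditions (i)--(ii) scale by $g$'' is the weak link. For $\Lambda=g\Lambda'$, condition (ii) requires every coset $w+g\Lambda'$, $w\in\mathbb{Z}^2$, to contain a representative of $\ell_1$-norm at most $gj$, i.e.\ $\min_{\lambda\in\Lambda'}\|w/g+\lambda\|_1\le j$ for all $w$ --- a covering of the finer grid $\tfrac1g\mathbb{Z}^2$ by $\Lambda'$. Your planned verification (the interval-covering count showing the $2j^2+2j+1$ points of the $\ell_1$-ball hit all $2j^2$ residues) only certifies covering of $\mathbb{Z}^2$, which is strictly weaker and does not imply what you need when $g>1$. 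The gap is repairable, but by an idea absent from the proposal: once $\Lambda'$ has covolume $2j^2$ and $\ell_1$-minimum exactly $2j$, the open $\ell_1$-balls of radius $j$ centered at $\Lambda'$ form a density-one packing, so the closed balls tile $\mathbb{R}^2$ and the covering radius is $j$ over all of $\mathbb{R}^2$; this both legitimizes the scaling by $g$ and makes condition (ii) automatic. Without it, the reduction to ``a circulant attaining the bound'' is insufficient.

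Second, the claim that ``$m=2$ is immediate'' via the diagonal jump set is wrong when $n_o$ is even: for $\{[1\ 1]^\top,[1\ -1]^\top\}$ both coordinates of $a[1\ 1]^\top+b[1\ -1]^\top$ have the same parity, so the map into $\mathbb{Z}_{n_s}\times\mathbb{Z}_{n_o}$ has image of index $2$, the graph is disconnected, and the kernel strictly contains $n_oD_2$. The lattice $n_oD_2$ is still the right object (its quotient is $\mathbb{Z}_{n_o}\times\mathbb{Z}_{2n_o}\cong T$), but it must be realized by a different surjection --- which is exactly what your general-$m$ machinery, or the paper's jump set with $\omega=1$, provides. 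In exchange for these repairs, your route buys something the paper's does not: it cleanly separates the metric question (which covolume-$N$ lattices are $\ell_1$-extremal) from the group-theoretic one (which of them have quotient $\mathbb{Z}_{n_s}\times\mathbb{Z}_{n_o}$), and the tiling observation shows that \emph{any} kernel lattice with $\ell_1$-minimum $2t$ and the right quotient achieves the bound, a statement more general than the single construction verified in the appendix.
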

\begin{proof}[Proof Sketch]
    The complete proof is in the appendix. We prove this by construction. In particular, we prove that the jump set $\left\{[1\ 0]^\top,[\omega\ 1]^\top\right\}$, where $\omega=m-1$, achieves the lower bound.
\end{proof}
\begin{figure}
    \centering
    \includegraphics[width=0.45\textwidth]{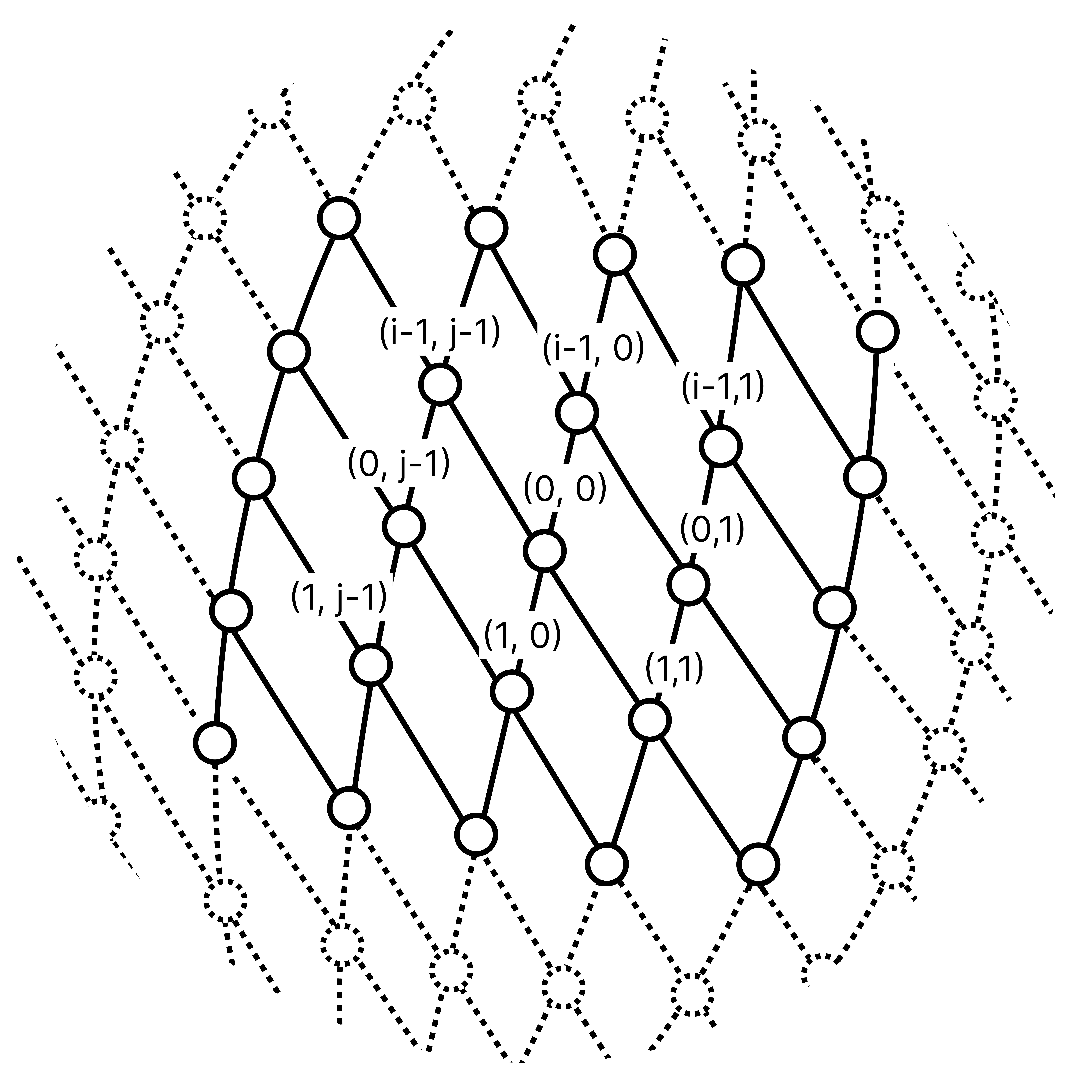}
    \caption{Example topology with fixed intra-orbital links and inter-orbital links with an offset of $1$ (note that for the standard mesh grid the offset is 0). $n_s=i$ and $n_o=j$. When $n_s/n_o=2$, this configuration meets the ASPL lower bound (see~\Cref{cl:mcn}).}
    \label{fig:twisted_torus}
\end{figure}
 
The construction in the proof of~\Cref{cl:vs4lb} is notable for two reasons:
a) the resulting topology is qualitatively similar to the mesh grid but with an offset in the cross link~(see~\Cref{fig:twisted_torus}), and b) the optimal topology still maintains ISL connections between adjacent satellites in the orbital plane. This latter point is important for topology design, as the intra-orbital plane ISL links are more stable~\cite{soret2019inter}. 
Based on this construction, several constellations and topologies exist that achieve the lower bound while maintaining intra-orbital plane ISL connections and cross links to satellites in adjacent orbital planes. A closer inspection of the cross-link jump $[\omega\ 1]^\top$ shows that the value of $\omega$ is square-root proportional to $n_s$:
\begin{align}
    \omega=\sqrt{\frac{2n_s}{n_o}}-1\,.
\end{align}
This square-root proportionality ensures that the required link range to maintain the topology does not increase. That is, as $n_s$ increases, the longitudinal distances between nodes decrease proportionally to $\frac{1}{n_s}$ (see~\Cref{fig:earth_to_torus}) while the vertical offset increases on the order of $\sqrt{n_s}$. Therefore, the overall physical distance the cross-links need to traverse would roughly scale as $\frac{1}{\sqrt{n_s}}$. As a result, the more densely populated each orbital plane is, the less likely the ASPL optimal offset $\omega$ imposes a constraint on the link range.

Based on~\Cref{cl:mcn}, we see that when the ratio $n_s/n_o$ meets the right criterion, we can use the offset-based construction to achieve the ASPL lower bound performance. As $n_s$ is scaled with $n_o$ fixed, this criterion is met infinitely often. Therefore, if $A^*(n_s,n_o)$ denotes the best ASPL possible for a $n_s\times n_o$ vertex-symmetric graph and $A_{LB}(n_s,n_o)$ is the corresponding lower bound,~\Cref{cl:mcn} implies $\liminf_{n_s\to\infty}\frac{A^*\left(n_s,n_o\right)}{A_{LB}(n_s,n_o)}=1$. We wish to investigate how well this square root proportional offset-based construction performs for general $n_s$ and $n_o$ when $n_s>n_o$. In~\Cref{fig:deg4scaling} we plot this topology, choosing the offset to be $\sqrt{\frac{2n_S}{n_o}}-1$ rounded to the nearest integer. For comparison, we find the topology with the best offset -- still keeping intra-orbital links fixed -- via simulation by exhaustive search. The result is shown in~\Cref{fig:deg4scaling} holding $n_o$ constant ($n_o=4$) and scaling $n_s$.

\begin{figure}
    \centering
    \includegraphics[width=0.4\textwidth]{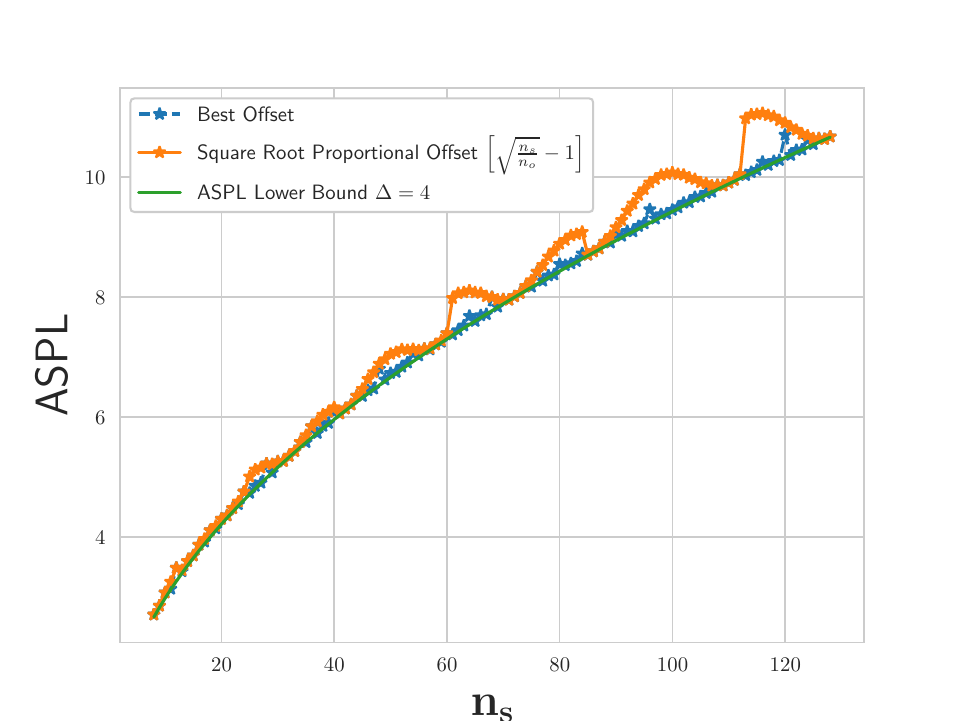}
    \caption{Figure showing the ASPL performance of topologies with fixed $n_o=4$ and $n_s=10, 11,\hdots, 160$.}
    \label{fig:deg4scaling}
\end{figure}
As can be seen in~\Cref{fig:deg4scaling}, the ASPL of the square-root proportional offset-based construction remains close to the lower bound. The optimal-offset topology performance is consistently closer to the lower bound, often achieving the lower bound performance.
% We see that for $n_s=10,40,90,160$ 
This observation leads us to surmise that the best possible ASPL topology as $n_s$ increases and $n_o$ is fixed remains very close to the lower bound. Specifically, we conjecture the following.

\noindent \textbf{Conjecture}: \textit{for fixed $n_o$, }
\begin{align}
    \limsup_{n_s\to\infty}\frac{A^*(n_s,n_o)}{A_{LB}(n_s, n_o)}=1\,.
\end{align}

The conjecture posits that one can always find a vertex-symmetric topology that is close in ASPL performance to the lower bound. 
% However, whether such topologies are physically feasible is unclear.
\subsection{Degree 3 Vertex-Symmetric Topologies}
Next, we investigate ASPL performance of degree-3 topologies. A standard 3-regular topology that is vertex-symmetric is the honeycomb mesh topology, which can be defined with the jump set $\{[1\  0]^\top, [-1\ 0]$, $[0\ 1]^\top\}$. Two jumps are along one axis, and the remaining jump is along the other axis. 

Similar to the degree 4 case, the honeycomb mesh cannot achieve the vertex-symmetric lower bound, since the diameter of the honeycomb mesh is  $\lfloor \frac{n_s}{2}\rfloor + \lfloor \frac{n_o}{2}\rfloor$ as well. We now state a similar result to that of~\Cref{cl:grid_bigger_lb} for the $\Delta=3$ vertex-symmetric case.
\begin{claim}\label{cl:3_grid_bigger_lb}
    For a $n_s\times n_o$ constellation with $n_s+n_o\geq 16$ and $\Delta=3$, the ASPL for the honeycomb topology is strictly greater than the vertex-symmetric lower bound.
\end{claim}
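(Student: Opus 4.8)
The plan is to follow the diameter-based argument of \Cref{cl:grid_bigger_lb} almost verbatim: I will show that the honeycomb mesh already fails to meet the \emph{diameter} lower bound implied by the proof of \Cref{cl:vs3lb}, so its level-occupancy vector cannot coincide with the unique optimal solution of the packing program in \eqref{eq:packing_objective}, and therefore its ASPL is strictly above the bound of \Cref{cl:vs3lb}.

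First I would record the two diameters. On one hand, as noted just before the claim, the honeycomb mesh has diameter $D(G_{HC}) = \lfloor n_s/2 \rfloor + \lfloor n_o/2 \rfloor$. On the other hand, in the proof of \Cref{cl:vs3lb} the integer $k = \big\lfloor \tfrac{-1+\sqrt{\tfrac{8}{3}N-\tfrac{5}{3}}}{2}\big\rfloor$ is the largest level that the relaxed packing fills to capacity; whenever $\tfrac{3}{2}k^2+\tfrac{3}{2}k+1 < N$ the optimal level vector still places mass at level $k+1$, so any $\Delta=3$ vertex-symmetric graph meeting the ASPL lower bound has diameter at least $D_{LB}^{(3)} := \big\lceil \tfrac{-1+\sqrt{\tfrac{8}{3}N-\tfrac{5}{3}}}{2}\big\rceil$, the degree-$3$ analogue of \eqref{eq:diam_lb}.

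Next I would prove the strict inequality $D_{LB}^{(3)} < D(G_{HC})$ whenever $n_s+n_o \geq 16$. Writing $s := n_s+n_o$ and using $N = n_s n_o \leq s^2/4$ (AM--GM), we get $D_{LB}^{(3)} \leq \big\lceil \tfrac{-1+\sqrt{\tfrac{2}{3}s^2-\tfrac{5}{3}}}{2}\big\rceil$, a quantity nondecreasing in $s$, while $D(G_{HC}) \geq \lceil s/2\rceil - 1$. Since $\sqrt{2/3} \approx 0.816 < 1$, the continuous versions of these two bounds cross at a small value of $s$, and the ceiling operators together with the parity of $n_s$ and $n_o$ shift the crossover by only a bounded additive amount; it therefore suffices to check the finitely many borderline values of $s$ --- and, for each, the admissible parities of $n_s,n_o$ --- by direct substitution, which pins the threshold at $n_s+n_o=16$.

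Finally, given $D(G_{HC}) > D_{LB}^{(3)}$, the honeycomb mesh places a vertex at a level strictly above the support of the optimal packing vector, so by the uniqueness argument in the proof of \Cref{cl:vs4lb} --- carried out with the capacity $x_i \leq 3i$ used in the proof of \Cref{cl:vs3lb} --- its level vector is suboptimal, and hence its ASPL strictly exceeds the bound of \Cref{cl:vs3lb}. I expect the third step to be the main obstacle: the asymptotics are plainly in our favor, but because $D(G_{HC})$ and $D_{LB}^{(3)}$ genuinely coincide for some small, near-square constellations, the constant $16$ cannot be lowered without a finer, non-diameter argument, so extracting exactly $16$ amounts to the same small-case bookkeeping behind the ``$(*)$ holds when $n_s+n_o\geq 10$'' step in \Cref{cl:grid_bigger_lb}.
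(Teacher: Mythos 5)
Your proposal is correct and follows essentially the same route the paper intends: the paper states that the proof of \Cref{cl:3_grid_bigger_lb} mirrors that of \Cref{cl:grid_bigger_lb}, i.e., compare the honeycomb diameter $\lfloor n_s/2\rfloor+\lfloor n_o/2\rfloor$ against the degree-3 analogue of the diameter lower bound \eqref{eq:diam_lb} obtained from the cumulative capacity $\tfrac{3}{2}h^2+\tfrac{3}{2}h+1$, and then invoke the uniqueness of the greedy packing solution to convert the diameter gap into a strict ASPL gap. Your AM--GM reduction to the single parameter $s=n_s+n_o$ and the finite borderline check are exactly the bookkeeping hidden in the paper's ``$(*)$ holds when $n_s+n_o\geq 10$'' step.
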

The proof is similar to the proof of~\Cref{cl:grid_bigger_lb} and is omitted for brevity.
An intuitive explanation for why both the honeycomb mesh and the mesh grid cannot achieve the respective lower bounds is that both topologies' diameters scale linearly with $n_s$ and $n_o$. Since the jump definitions are fixed for both topologies, they do not change in response to the change in network size. A topology which achieves the best ASPL performance would meet the diameter lower bound, which scales $\Theta(\sqrt{n_sn_o})$ (see~\Cref{eq:diam_lb}).

While few constellations have been found for which the degree 3 vertex-symmetric topology lower bound holds, we can show that when $n_s/n_o=6$, there exists a topology that achieves exactly the lower bound performance. The claim is presented in the following, and the proof is omitted for brevity.
\begin{claim} \label{cl:d3m6n}
For $\frac{n_s}{n_o}=6$ and $\Delta =3$, there exists a vertex-symmetric topology such that the vertex-symmetric lower bound performance is achieved.
\end{claim}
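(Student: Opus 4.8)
The plan is to establish \Cref{cl:d3m6n} by an explicit construction, mirroring the proof of \Cref{cl:mcn}. I would use the \emph{twisted honeycomb} on the $n_s\times n_o$ torus: keep the two intra-orbital jumps $[1\ 0]^\top$ and $[-1\ 0]^\top$ of the honeycomb mesh and replace the cross-plane jump $[0\ 1]^\top$ by $[\omega\ 1]^\top$ for a small \emph{odd} offset $\omega$ --- the degree-$3$ analogue of the offset $\omega=\sqrt{2n_s/n_o}-1$ of \Cref{cl:mcn}, with the precise value fixed by $n_s/n_o=6$. I would pair this jump set with the partition $\mathbbm{1}^R_{\*v}=v[0]\bmod 2$. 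Because $n_s=6n_o$ is even, this colouring is consistent on the torus and splits $V$ into two equal halves; one then verifies directly from the definitions in \Cref{sec:model} that the partition is spatially symmetric and that each vertex meets exactly the three prescribed hop types, so the construction is a well-defined $3$-regular vertex-symmetric topology. (Taking $\omega$ odd is what makes first-coordinate parity the relevant colouring and removes any parity restriction on $n_o$.)

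The next step is to pin down the level profile that must be matched. For $N=n_sn_o=6n_o^2$, the formula in \Cref{cl:vs3lb} yields $k=2n_o-1$, so a lower-bound-achieving topology must reach, from each root, exactly $3i$ nodes at distance $i$ for $i=1,\dots,2n_o-1$ and exactly $3n_o-1$ nodes at distance $2n_o$. By vertex symmetry it is enough to check this for the breadth-first tree rooted at $\*0$. Letting $\Lambda=\langle[n_s\ 0]^\top,[0\ n_o]^\top\rangle$ be the torus lattice and $d(\cdot)$ the honeycomb distance that the twisted jump set induces on $\mathbb{Z}^2$ (whose shell at distance $\ell$ carries exactly $3\ell$ points before any wrap-around, attaining the bound of \Cref{cl:deg3_max_packing}), the goal reduces to showing that (i) the ball $B=\{\*w:d(\*w)\le 2n_o-1\}$, which by \Cref{eq:deg3_max_coverage} has $6n_o^2-3n_o+1$ points, injects into $\mathbb{Z}^2/\Lambda$ and that every point of $B$ keeps torus distance equal to its $\mathbb{Z}^2$ distance, and (ii) the $6n_o$ points of the shell $\{\*w:d(\*w)=2n_o\}$ reduce modulo $\Lambda$ exactly onto the $3n_o-1$ residues missed by $B$. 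Granting (i) and (ii), the breadth-first level sizes of the construction coincide with the unique optimal solution of the linear program in the proof of \Cref{cl:vs3lb}, so its ASPL equals the lower bound.

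The technical heart --- and the main obstacle --- is (i)--(ii), which demand a precise description of the honeycomb-metric geometry of the sheared lattice $\Lambda$. The honeycomb ball about a vertex is not centrally symmetric (it is elongated along the intra-orbital axis and lopsided along the cross-plane axis, since ``up'' steps are available only from one colour class), so the familiar ``minimum distance $\ge 2r+1$'' packing criterion cannot be invoked verbatim. I would instead work out the exact shape of the radius-$r$ twisted-honeycomb ball in $\mathbb{Z}^2$ and its tiling lattice, and then verify that, for the $\omega$ matched to $n_s/n_o=6$, the quotient $\mathbb{Z}^2/\Lambda$ has honeycomb covering radius $2n_o$ while its radius-$(2n_o-1)$ ball is a fundamental domain short by exactly the $3n_o-1$ cells that the next shell then fills --- i.e.\ no wrap-around shortens any distance prematurely and the spillover shell covers the complement exactly once in the aggregate. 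This is a more delicate version of the diamond-tiling computation behind \Cref{cl:mcn}; a little case analysis on the parity of $n_o$ and on small constellations may also be needed. Note that the circulant route used for \Cref{cl:mncoprime} is not available here, because $\gcd(n_s,n_o)=n_o$ obstructs the Chinese-remainder identification of the torus with a cyclic group --- which is exactly why a direct torus construction is required.
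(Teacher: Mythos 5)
Your proposal is sound and sits at the same level of rigor as the paper's own argument, which is likewise only a sketch: the paper simply names the jump set $\{[1\ 0]^\top,[3\ 0]^\top,[5\ 1]^\top\}$ and omits the verification, while you name a different jump set and spell out the verification roadmap. The concrete difference is the construction itself: the paper drops the honeycomb's $[-1\ 0]^\top$ jump in favour of a second forward intra-orbital jump $[3\ 0]^\top$ and a cross jump $[5\ 1]^\top$, whereas you keep the full honeycomb intra-orbital pair $\{[1\ 0]^\top,[-1\ 0]^\top\}$ and shear only the cross jump to $[\omega\ 1]^\top$. You should commit to $\omega=3$ (which is what your $\sqrt{2n_s/n_o}-1$ analogy rounds to); with that value your construction checks out on small cases ($n_o=1$ gives $K_{3,3}$, and $n_o=2,3$ match the level profile $3,6,\dots,3(2n_o-1),\,3n_o-1$ exactly). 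In fact the two constructions are closer than they look: both produce the same set of two-step (black-to-black) displacements $\pm\{[2\ 0]^\top,[2\ 1]^\top,[4\ 1]^\top\}$, so your twisted honeycomb is essentially a re-presentation of the paper's graph with a more physically interpretable jump set (it preserves both intra-orbital neighbours, which the paper emphasizes as desirable for link stability). Everything you compute explicitly is correct --- $k=2n_o-1$, the ball size $6n_o^2-3n_o+1$, the residual shell of $3n_o-1$ nodes, the parity-consistency of the colouring, and the observation that the circulant/Chinese-remainder route of \Cref{cl:mncoprime} is blocked because $\gcd(n_s,n_o)=n_o$ --- and your reduction of the claim to the injectivity-plus-exact-spillover conditions (i)--(ii) is the right formalization of what the paper leaves unstated. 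The one substantive caveat is that you defer exactly the step the paper also defers: the lattice-packing verification that no wrap-around shortens a distance below $2n_o$. Your description of why this is delicate (the honeycomb ball is not centrally symmetric, so generic packing criteria do not apply verbatim) is accurate, but until that computation is carried out for general $n_o$ your argument, like the paper's, is a construction plus a claim rather than a complete proof.
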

\begin{proof}[Proof Sketch]
    We prove by construction, using the jump set $\{[1\ 0]^\top,[3\ 0]^\top,[5\ 1]^\top\}$ to demonstrate that the lower bound as described in~\Cref{cl:vs3lb} is achieved. 
\end{proof}

\begin{figure}
    \centering
    \includegraphics[width=0.4\textwidth]{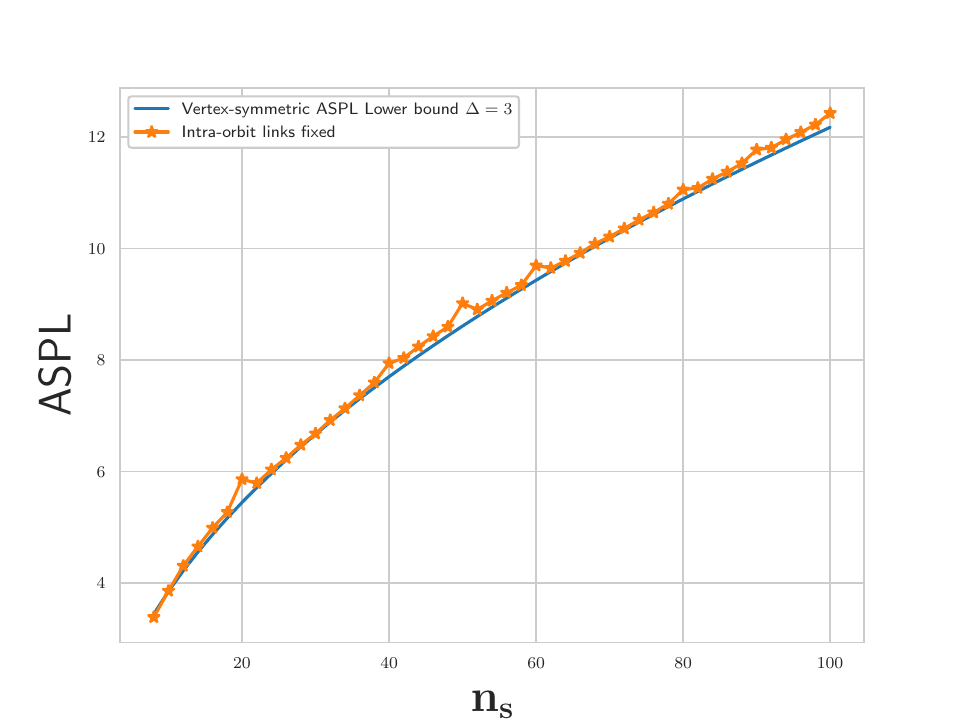}
    \caption{ ASPL performance of vertex-symmetric degree 3 topologies found by fixing intra-orbit links and searching for best inter-orbit link offset.The number of orbits is fixed $n_o=5$ and $n_s$ takes values $8,10,12,\hdots,88$.}
    \label{fig:deg3_scaling}
\end{figure}

    For general $n_s$ and $n_o$, we conjecture that as $n_s$ is scaled and $n_o$ is fixed there exist topologies that are close to the lower bound, as demonstrated in~\Cref{fig:deg3_scaling}.  An exhaustive search for the ASPL-optimal topology demonstrates that even if the lower bound is not exactly achievable for many configurations, there still exist vertex-symmetric topologies that perform close to the lower bound. Thus, we conjecture that $\limsup_{n_s\to\infty}\frac{A^*\left(n_s,n_o\right)}{A_{LB}(n_s,n_o)}=1$ holds for degree 3 as well.

    While designing vertex-symmetric topologies may be favorable for the regular network structure and the reduced routing complexity, vertex-symmetry imposes a rigidity in the design that restricts the ASPL scaling to be $\Theta(\sqrt{N})$ at best. In the following subsection we will analyze the ASPL performance when $\Delta$-regular networks are no longer restricted to being symmetric. The ASPL performance then improves to a logarithmic scaling behavior even when physical constraints such as link range are accounted for.

\subsection{General Regular Topologies}\label{subsec:genregtop_results}
As mentioned before, finding Generalized Moore Graphs is difficult, and for a given $\Delta$ and $N$ there is no guarantee that a Generalized Moore Graph exists.
Thus, one resorts to heuristics to find Generalized Moore Graphs or near-optimal regular graphs~\cite{satotani2018depth,hirayama2022faster}.  We use simulated annealing to find low-ASPL regular topologies and study whether it can scale with the ASPL lower bound. Choosing a good initial graph as a starting point is critical for simulated annealing, so we use a result from random graph theory to inform the choice of starting topology. If a graph is sampled uniformly at random from the space of all $\Delta$-regular graphs of size $N$, then with high probability the diameter of the graph is $\Theta(\log_{\Delta-1}(N))$~\cite{bollobas1998random}. Since the ASPL of a graph is upper-bounded by the diameter of the graph, a topology chosen randomly in such a manner will have a logarithmically scaling ASPL with high probability. We use this random sampling technique to choose our initial graph for the simulated annealing routine, using the algorithm in~\cite{steger1999generating}. Broadly speaking, the routine draws two nodes at random and introduces an edge between them if the two nodes are distinct and do not already have an edge connecting them in the graph. The performance of simulated annealing starting with the random sampling routine is shown in~\Cref{fig:sim_annealing}.

% For the rest of this section we fix $n_s=n_o$. The link range is fixed to $r=0.25$ while the constellation size grows. The result can be seen in~\Cref{fig:sim_annealing}: when the number of nodes in the network increases, the simulated annealing routine produces low-ASPL topologies that scale in proportion to the general regular topology ASPL lower bound (see~\Cref{eq:moore_lb}). 

The simulated annealing output with random sampling has favorable scaling behavior with respect to ASPL. However, the search space is over all $\Delta$-regular topologies, which contains a large portion of physically infeasible topologies given the spatial arrangement of satellites in a constellation. Therefore, we introduce a link range constraint $r$ such that if the toroidal distance between any two satellites $u$ and $v$ exceeds $r$, the edge $(u,v)$ would be removed from consideration. We introduce an additional condition that the edge is admitted into the graph only if the two randomly sampled nodes lie within a distance $r$ of each other. We compare the unconstrained sampling routine with simulated annealing against the modified routine using a distance-constrained simulated annealing. The result is shown in~\Cref{fig:dist_constrained}, for which $n_s=n_o$ while $n_s$ is scaled from 4 to 25, and the link range is fixed to $r=0.25$. As can be seen from the figure, when the number of nodes in the network increases, the simulated annealing routine produces low-ASPL topologies that scale in proportion to the general regular topology ASPL lower bound (see~\Cref{eq:moore_lb}). 

\begin{figure}
    \centering
    \includegraphics[width=0.4\textwidth]{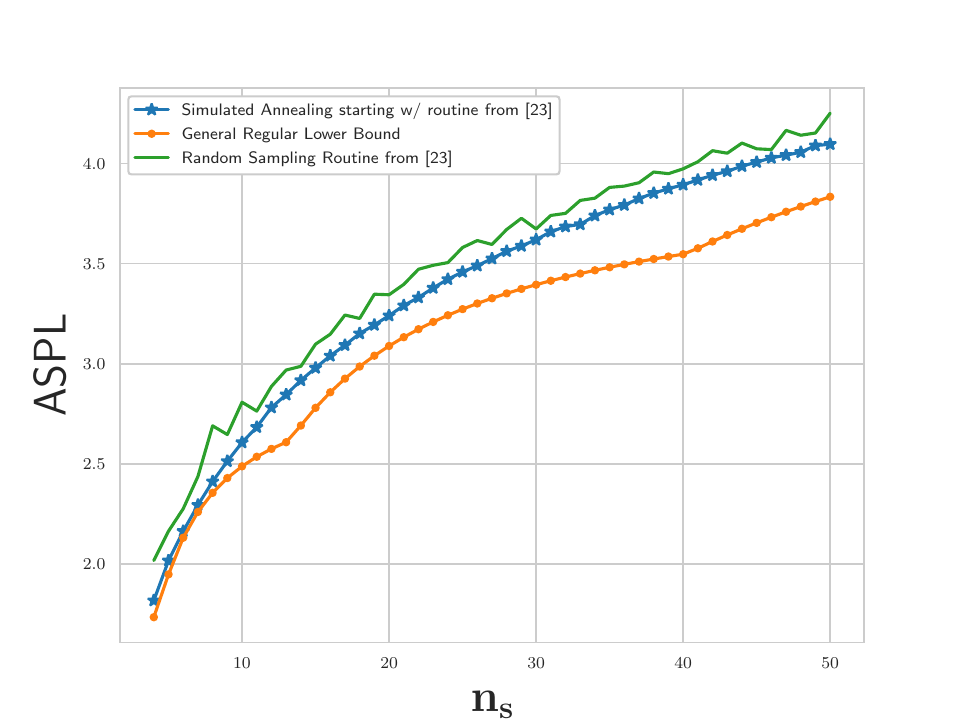}
    \caption{ Simulated Annealing for $\Delta=4$. We fix $n_o=4$}
    \label{fig:sim_annealing}
\end{figure}

% While the uniformity and logarithmic scaling of the sampling routine in~\Cref{alg:part_samp_r} is yet to be proved, the simulation results show that the effect on performance is negligible as the constellation size scales. 
The ASPL of topologies produced by the modified routine suggests that with sufficient satellite density, a simple procedure can find topologies with logarithmically scaling ASPL with high probability. To prove this, we devise a procedure called the Partitioning with Random Graph Sampling (PRGS) procedure. We outline PRGS for $\Delta=4$ in the following.
\begin{procsteps}\label[procstepsi]{alg:logs}
    \item For $\Delta=4$, Consider the unit-area square torus with $N$ satellites and a link range constraint $0<r<1$. Let $b$ equal the smallest integer such that $b\geq \frac{\sqrt{2}}{r}$. We divide the torus into $b^2$ square grid cells, each with side length $w=\frac{1}{b}$. Based on the size of each cell, any satellite within a grid cell is within link range of any other satellite in the same grid cell. 
    \item Let $N=cb^2$ for some integer $c\gg 1$, and let $n_s=n_o=\sqrt{N}$. We partition the nodes based on which grid cell they are located in, with nodes on the boundaries assigned to one of the adjacent partitions uniformly at random. 
    \item For any two adjacent cells, find the pair of satellites -- one located in each grid cell -- that are closest in distance to each other and connect them with an edge. For any grid cell, 4 such nodes will have edges that connect with a node in an adjoining grid cell. 
    \item Connect each of these 4 nodes together to form a cycle and treat the four nodes as one super-node with 4 free links. 
    \item Finally, use routine from~\cite{steger1999generating} in each grid cell to connect the vertices plus super node together to create a $4$-regular subgraph. We now have a $4$-regular connected topology of size $N$.
\end{procsteps}

\begin{figure}
    \centering
    \includegraphics[width=0.4\textwidth]{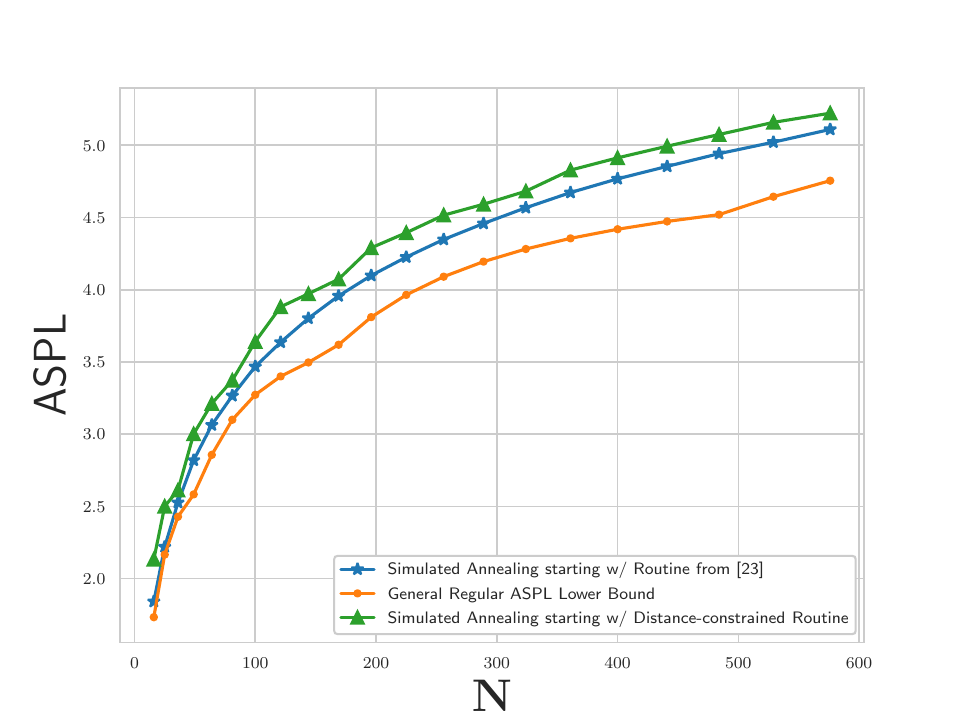}
    \caption{Simulated Annealing with random sampling, both with and without distance constraints. $\Delta=4$ and link range $r=0.25$. We let $n_s=n_o$.}
    \label{fig:dist_constrained}
\end{figure}
 Though described for degree 4, PRGS can be adapted for degree 3 as well. We show that this procedure scales logarithmically in the following.
\begin{claim}\label{cl:part_samp_r}
    The probability PRGS produces a topology with $O(\log N)$ ASPL goes to $1$ as $N\to\infty$.
\end{claim}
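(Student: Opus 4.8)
The plan is to prove the stronger statement that, with probability tending to $1$, the topology produced by PRGS has \emph{diameter} $O(\log N)$; since $A(G)$ never exceeds the diameter, this gives the claim immediately. Everything hinges on one observation: the link range $r$ is a fixed constant, so $b=\lceil\sqrt{2}/r\rceil$ is constant, the number of cells $b^{2}$ is constant, and $c=N/b^{2}\to\infty$. Each cell therefore contains $\Theta(c)=\Theta(N)$ nodes on a regular sub-grid of spacing $1/\sqrt{N}$, while the coarse ``quotient'' obtained by contracting each cell to a point has only $b^{2}=O(1)$ vertices. As a preliminary I would check that PRGS is well defined for all large $N$: any two cells adjacent on the torus contain a pair of nodes at distance $O(1/\sqrt{N})<r$ (take nodes in the grid columns, respectively rows, flanking the shared boundary at a common transverse coordinate), so Step~3 produces a valid edge for each adjacent pair and each cell acquires four distinct portal nodes, one per neighbor, once $N$ is past some threshold.

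The core of the argument separates two length scales. \textbf{Coarse scale:} the inter-cell edges of Steps~3--4 realize the $b\times b$ toroidal grid on the contracted vertices, which is connected with diameter at most $b=O(1)$. \textbf{Fine scale:} inside a fixed cell, after contracting its four portals into the super-node, the routine of~\cite{steger1999generating} outputs a near-uniform random $4$-regular graph on $m=c-3$ vertices; by the random-regular-graph facts already invoked in this paper (\cite{bollobas1998random}), with probability $1-o(1)$ this graph is connected and has diameter $(1+o(1))\log_{3}m=O(\log N)$, and properties that hold w.h.p.\ for the uniform model transfer to the routine's output. Because only $b^{2}=O(1)$ cells are involved, a union bound makes \emph{every} cell simultaneously connected with internal diameter $O(\log N)$ w.h.p.; the one caveat is that the routine must terminate successfully, which it does with probability bounded away from $0$, so one conditions on success or allows $O(1)$ restarts at a cost that is constant over $O(1)$ cells. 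Un-contracting each super-node into its four portal nodes, which Step~4 wired into a $4$-cycle, changes intra-cell distances by at most $2$, so each cell still has internal diameter $O(\log N)$ inside the full graph.

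Stitching the scales together finishes the proof. The full graph is connected w.h.p.\ (every cell is, and the contracted grid ties the cells together). Given $u$ in cell $A$ and $v$ in cell $B$, take a path $A=C_{0},C_{1},\dots,C_{\ell}=B$ of length $\ell\le b$ in the contracted grid; route $u$ to $A$'s portal toward $C_{1}$ ($O(\log N)$ hops), cross the portal edge ($1$ hop), walk across $C_{1}$ from its entry portal to its exit portal along the $4$-cycle ($\le 2$ hops), repeat through $C_{2},\dots,C_{\ell-1}$, then enter $B$ ($1$ hop) and walk to $v$ ($O(\log N)$ hops). The total is at most $O(\log N)+3(b-1)+1+O(\log N)=O(\log N)$ since $b=O(1)$, so $\mathrm{diam}(G)=O(\log N)$ w.h.p.\ and hence $A(G)=O(\log N)$ w.h.p. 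I expect the fine-scale step to be the real obstacle: transferring classical connectivity and diameter bounds for uniform random $4$-regular graphs to the only approximately uniform output of~\cite{steger1999generating} while correctly handling the conditioning on successful termination and the slightly nonstandard contracted graph containing the super-node; the coarse-scale and stitching arguments are bookkeeping once $b$ is recognized to be constant, and an analogous argument handles the degree-$3$ adaptation of PRGS.
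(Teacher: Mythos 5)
Your proposal is correct and follows essentially the same decomposition as the paper's proof: intra-cell distances are $O(\log N)$ w.h.p.\ by the random-regular-graph diameter result of~\cite{bollobas1998random}, crossing between cells costs $O(1)$ hops because the number of cells $b^2$ is constant, and the two scales are stitched together to bound every pairwise distance. Your write-up is more careful than the paper's sketch (the union bound over cells, the transfer from the uniform model to the output of~\cite{steger1999generating}, and the un-contraction of the super-node are all glossed over there), but the underlying argument is the same.
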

\begin{proof}
    The average hop between any two nodes within the same grid cell scales as $\Theta(\log N)$ with probability $\to 1$ by virtue of the result from~\cite{bollobas1998random}. For any two nodes in different grid cells,  it takes $\Theta(\log N)$ hops to reach a boundary node. Then, it takes $\Theta(1)$ hops to reach the grid cell of the destination node since there are a constant number of grid cells. Finally, it takes another $\Theta(\log N)$ hops to reach the destination node from the grid cell boundary.
\end{proof}
Using PRGS, as the size of the constellation increases while $\Delta$ remains constant, it is very likely the uniformly randomly sampled graph has logarithmically scaling diameter, and therefore logarithmically scaling ASPL. As the constellation density increases and the maximum link range remains fixed, the number of satellites within link range increases proportionally to $N$. Therefore, so long as the constellation is sufficiently dense, a graph with low ASPL is physically feasible.

%%%%%%%%%%%%%%%%%%
%% Conclusion
%%%%%%%%%%%%%%%%%%
\section{Conclusion}\label{sec:conclusion}

We explore ASPL-optimal ISL topology design and provide analytical lower bounds on performance for two topology design cases: vertex-symmetric and general regular topologies. For the vertex-symmetric case, we demonstrate constellation parameters exist for which the lower bound is attainable for degree 3 and 4. We show that optimal topologies that meet the vertex symmetric lower bound can have potentially physically realizable constructions given the proximity of the ISL links and favorable distance scaling properties. For the general regular topology case, heuristics are used to find low-ASPL topologies and show that the performance scales on the order of the ASPL lower bound for general regular topologies even with link range constraints.

%%%%%%%%%%%%%%%%%%
%% Appendix
%%%%%%%%%%%%%%%%%%
\appendices
\section{Proof of Claim 1}
\begin{proof}
    We find the largest value of $k$ such that $\sum_{i=1}^k x_i<N-1$. Since~\Cref{eq:capacity_constraint} is active for all $i\leq k$ we know that $1+\sum_{i=1}^k x_i=(2k^2+2k+1)$. Since the maximum number of nodes that can be reached in $k+1$ hops is $2(k+1)^2+2(k+1)+1$, we have the following relation:
\begin{align}
    2k^2+2k+1\leq N < 2(k+1)^2+2(k+1)+1\,.
\end{align} 
Solving either the upper or lower quadratic inequality leads to 
    $k=\bigg\lfloor \frac{-1+\sqrt{2N-1}}{2} \bigg\rfloor$.
\end{proof}

\section{Proof of Claim 5}
To prove~\Cref{cl:mncoprime}, we need the following lemma

\begin{lemma}\label{lem:circulant}
Any vertex symmetric topology of a constellation with $n_s$ and $ n_o$ co-prime is isomorphic to a circulant graph.
\end{lemma}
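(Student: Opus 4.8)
The plan is to recognize a vertex-symmetric topology as a Cayley graph of the abelian group $\mathbb{Z}_{n_s}\oplus\mathbb{Z}_{n_o}$ and then transport that Cayley structure across the Chinese Remainder Theorem (CRT) isomorphism onto the cyclic group of order $N$, at which point the graph is a circulant essentially by definition.

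Concretely, I would first observe that the operator ``$\oplus$'' is exactly addition in the group $G_0:=\mathbb{Z}_{n_s}\oplus\mathbb{Z}_{n_o}$, that the vertex set $V$ equals all of $G_0$, and that for even degree $\Delta$ the definition of vertex symmetry states that the neighborhood of \emph{every} vertex $\*v$ is $\*v\oplus S$ for the fixed jump set $S=\{\pm\*e_1,\dots,\pm\*e_{\Delta/2}\}$. Since $\Delta$-regularity forces $S$ to be a set of $\Delta$ distinct nonzero vectors and clearly $S=-S$, the topology is precisely the Cayley graph $\mathrm{Cay}(G_0,S)$: an undirected, loop-free, $\Delta$-regular graph on $G_0$.

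Next I would invoke the CRT: because $n_s$ and $n_o$ are co-prime, the map $\phi:G_0\to\mathbb{Z}_N$ sending $\*v$ to the unique residue congruent to $v[0]\pmod{n_s}$ and to $v[1]\pmod{n_o}$ is a group isomorphism. A group isomorphism carries Cayley graphs to isomorphic Cayley graphs: since $\phi$ is an additive bijection on vertices, for any $\*u,\*v\in V$ we have $\*u$ adjacent to $\*v$ in $\mathrm{Cay}(G_0,S)$ iff $\*u\ominus\*v\in S$ iff $\phi(\*u)-\phi(\*v)\in\phi(S)$ iff $\phi(\*u)$ is adjacent to $\phi(\*v)$ in $\mathrm{Cay}(\mathbb{Z}_N,\phi(S))$. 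Hence $\phi$ is a graph isomorphism onto $\mathrm{Cay}(\mathbb{Z}_N,\phi(S))$, and since $\phi(S)=-\phi(S)$ with $0\notin\phi(S)$, that target is by definition a circulant graph on $N$ vertices, which settles the even-degree case.

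The step I expect to require the most care is the odd-degree notion of vertex symmetry, where adjacency is defined asymmetrically through the partition $V=V_R\cup V_B$ (a vertex $\*v\in V_B$ is joined to $\*v\oplus\*e_i$ for each $i$). Before the CRT transport applies, I would first show that the ``spatially symmetric'' condition on $\mathbbm{1}^R$ forces $V_R$ to be a coset of an index-$2$ subgroup $K\le G_0$ --- translating a fixed element of $V_R$ to the origin, that condition closes the translate under addition and under negation --- and then verify edge consistency, namely that each edge leaving a $V_B$-node is recovered from the $V_R$ side. Only after rewriting the edge set in this coset-compatible, translation-invariant form does the same $\phi$-transport go through. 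This bookkeeping, rather than any deep idea, is where the work lies; note that for the degree-$4$ setting that \Cref{cl:mncoprime} actually invokes, only the even-degree argument above is needed and it follows immediately.
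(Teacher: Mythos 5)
Your proof is correct and follows essentially the same route as the paper: the paper's node-labeling function $\sigma$ is exactly the inverse of your CRT isomorphism $\phi$ (its pre-image of $x$ is $[x \bmod n_s,\ x \bmod n_o]^\top$), and both arguments conclude by checking that additivity of the map carries the jump-set adjacency onto a circulant adjacency on $\mathbb{Z}_N$. Your extra care with the odd-degree, partition-based definition of symmetry addresses a case the paper's proof glosses over, but it does not change the underlying approach.
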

\begin{proof}
     Let $\sigma(\cdot) $ be a function which assigns a node ID via the following recurrence relation:
    \begin{align}\label{eq:bijectivecoprime}
    \sigma(0,0)&=0\\
        \sigma(a,b)&=\sigma(a-1 \pmod {n_s}, b-1\pmod {n_o})+1\  .
    \end{align}
    Since $n_s$ and $n_o$ are co-prime, it is straightforward to verify that~\Cref{eq:bijectivecoprime} is bijective, since the pre-image  of $x\in\{0,\hdots,N-1\}$ is $\left[ x\pmod {n_s}\ x\pmod {n_o} \right]^T$ which is unique and exists for all $x$. To prove the isomorphism, we show that for vertex pair $v$---$u$, where $v,u\in V$ and $v\oplus e=u$, applying the function $\sigma(\cdot)$ results in a vertex pair $\sigma(v)$---$\sigma(u)$ connected by $\sigma(e$). By definition, 
    \begin{align*}
         \sigma(u)&=b\ \text{s.t.}\ \begin{bmatrix}
             b\pmod {n_s}\\ b\pmod {n_o}
         \end{bmatrix}=\begin{bmatrix}
             v_1+e_1\pmod {n_s}\\ v_2+e_2\pmod {n_o}
         \end{bmatrix}\,.
    \end{align*}
Similarly, let $\sigma(v)=a$ and $\sigma(e)=c$.
    Since $u=v\oplus e$, we know that
    \begin{align}
        \begin{bmatrix}
        u_1\\u_2
    \end{bmatrix}=\begin{bmatrix}
        v_1+e_1\pmod {n_s}\\
        v_2+e_2 \pmod {n_o}
    \end{bmatrix}\,,
    \end{align}
    which by definition equals $\begin{bmatrix}
        b\pmod {n_s}\\
        b \pmod {n_o}
    \end{bmatrix}\,.$ Since the function $\sigma(\cdot)$ is injective, we know that $\sigma(v\oplus e)=\sigma(u)=b$, and we have proven the lemma.
\end{proof}
\begin{proof}[Proof of~\Cref{cl:mncoprime}]
    Since~\Cref{lem:circulant} holds, we show there exists a minimum ASPL topology for every circulant graph of any nontrivial size ($N>6$), which is isomorphic to a vertex-symmetric topology. We choose the jump set $\{w,w+1\}$ such that
        $\frac{4w^2+1}{2}\leq N \leq 2(w+1)^2\,.$
    when $N\leq \frac{(2w+1)^2+1}{2}$ it is straightforward to verify that the diameter lower bound for the circulant graph is $w$. Therefore, in no more than $w$ hops, we must span $N$ consecutive integers with the integer linear sum $aw+b(w+1)$ such that 
     \begin{align}
        |x|\leq w(w+1)\label{eq:xless}\\
        x=aw+b(w+1)\label{eq:circ_eq} \\
        |a|+|b|\leq w\,.\label{eq:hopless}
    \end{align}
    Since any two consecutive integers are coprime, $w$ and $w+1$ are coprime. Therefore, given $a,b$ and $x$, all the combinations of integers equal to $x$ can be described with the expression $(a+p(w+1))w+(b-pw)(w+1)$ for an arbitrary integer $p$. We show that in fact $p=0$, and there is only one pair of coefficients that meet all~\Cref{eq:xless,eq:circ_eq,eq:hopless}. Without loss in generality, if $a<0$ and $b>0$, and~\Cref{eq:circ_eq,eq:hopless} hold, then the coefficients $a'=a+p(w+1)$  and $b'=b-pw$ violate~\Cref{eq:hopless} for any non-zero $p$. 
    Since we established that $a$ was negative and $b$ was positive,
        $|a|+|b|=b-a\,.$
    However, $b-a\geq w+1$, which violates~\Cref{eq:hopless}. A similar line of reasoning follows for $b=0$. Thus $p=0$. Since we have established the uniqueness of all integer sums with distinct pairs of $a,b$ which meet the constraints of~\Cref{eq:xless,eq:circ_eq,eq:hopless}, we may quantify the number of distinct pairs, which is $2w^2+2w+1\geq \frac{(2w+1)^2+1}{2}\geq N$. Thus the vertex-symmetric lower bound is met. 

    For $N > \frac{(2w+1)^2+1}{2}$, the diameter lower bound is $w+1$. In addition to the $2w^2+2w+1$ distinct values that can be reached in up to $w$ hops, we can quantify the number of distinct values that are reached in $w+1$ hops, which is $4w+4$. However, it is easy to verify that the only distinct sums are for $a,b$ such that $|aw+b(w+1)|>w(w+1)$ for which there are exactly $2w+2$ distinct coefficient pairs. Therefore, in total $2w^2+2w+1+2w+2>2w^2+4w+2=2(w+1)^2\geq N $ values can be represented, which is sufficient to span all nodes $0,\hdots, N-1$. 
\end{proof}

\section{Proof of Claim 6}
\begin{proof}
   Let $c=n_s/n_o$. Assume $c>2$. It is straightforward to show that the value of $k$ is $\sqrt{\frac{N}{2}}-1=D-1$ where $D$ is the diameter lower bound (see~\Cref{eq:diam_lb}). We first show that for up to $k$ hops the number of nodes covered is maximized.  For $[a\ b]^\top\neq [c\ d]^\top$, we must show that $ae_1\oplus be_2\neq ce_1\oplus de_2$, where $|a|+|b|\leq k$ and same for $c$ and $d$. We start with the case $b=d$. If so, $a=c+pn_s$ for some integer $p$. Since $n_s>2k$, $p$ must be $0$ which contradicts the statement $[a\, b]^\top\neq [c\, d]^\top$. If $a=c$ then $b=d+\frac{pc}{\omega}n_o$. It is straightforward to show that $c$ and $ \omega$ are co-prime, and $p=\ell\omega$, thus $b=d+\ell n_s$. $|\ell n_s|$ exceeds $2D$ for $\ell\neq 0$, thus we have a contradiction. For general $a,b,c,d$, we have the relation $a+p\omega n_o=c$ must hold for some integer $p\in\{-1,1\}$. So long as $c>2$, we can show $\omega n_o> 2D$ and thus the equation cannot hold unless $p=0$. We have now shown that up to $k$ hops the number of nodes is maximized. For nodes exactly $D=k+1$ hops away, there should be $2D-1$ that cannot be covered in less than $D$ hops. We can easily enumerate all $2D-1$ nodes by setting $(a,b)$ to the sequence $(0,D),(1,D-1),(2,D-2),\hdots,(D-1,1)$ as well as $(-1,D-1),(-2,D-2),\hdots,(-D+1,1)$ which adds up to $2D-1$.

   We omit the proof for $c=2$ for brevity, but it follows a simpler line of analysis.
\end{proof}

%%%%%%%%%%%%%%%%%%
%% References
%%%%%%%%%%%%%%%%%%
\bibliography{satref}
\bibliographystyle{ieeetr}

\end{document}